\newtheorem{theorem}{Theorem}
\newtheorem{definition}[theorem]{Definition}
\newtheorem{lemma}[theorem]{Lemma}
\newtheorem{proposition}[theorem]{Proposition}
\theoremstyle{definition}
\theoremstyle{remark}
\newtheorem{example}[theorem]{Example}
\newcommand* \ba {\mathop{\rm ba}\nolimits }
\newcommand* \argmin {\mathop{\rm argmin}\nolimits }
\newcommand* \pba {\mathop{\rm pba}\nolimits }
\newcommand* \conv {\mathop{\rm conv}\nolimits }
\newcommand*  \dd {\mathop{\ \rm d}\nolimits}
\newcommand{\myenddefinition}{$\blacklozenge$}
\newcolumntype{x}[1]{>{\centering\arraybackslash}p{#1}}
\newcolumntype{C}[1]{>{\centering\let\newline\\\arraybackslash\hspace{0pt}}m{#1}}
\def\R{\mathbb{R}}
\def\O{\Omega}
\def\o{\omega}
\def\r{\rho}
\def\:={\mathrel{\mathop:}=}
\title{Three Variations on  Money Pump, Common Prior, and Trade}
\author{Ziv Hellman\thanks{\ Department of Economics, Bar-Ilan University, ziv.hellman@biu.ac.il. Support by Israel Science Foundation grant 448/22 is gratefully acknowledged.} \, and Mikl\'os Pint\'er\thanks{\ Corresponding author: Corvinus Center for Operations Research, Corvinus University of Budapest, pmiklos@protonmail.com. Support by the Hungarian Scientific Research Fund under projects K 133882 and K 119930 is gratefully acknowledged.}}
\begin{document}

\maketitle

\begin{abstract}
We consider finite information structures, and quest for the answer of the question: What is the proper definition of prior?

In the single player setting we conclude that a probability distribution is a prior if it is disintegrable, because this definition excludes money pump.

In the multiplayer setting our analysis does not boil down to one proper notion of common prior (the multiplayer version of prior). The appropriate notion is a choice of the modeller in this setting. We consider three variants of money pump, each "defines" a notion of common prior.

Furthermore, we also consider three variants of trade, each correspond to one of the money pump variants, hence to one of the common prior variants.

\bigskip

\textit{Keywords:} Information structure, Common prior, Trade, Money pump, Disintegrability, Conglomerability  
\end{abstract}


\section{Introduction}

Which come first, priors or posteriors? The question seems prima facie ill posed.
The words prior and posterior in their plain meanings carry connotations of prior-before, posterior-after.
Furthermore, the standard paradigm in nearly all of the decision theory and game theory literature, and by extension the economics literature in general, posits a an arrow of time, moving from the ex ante stage to the interim and then ex post stages. 
As part and parcel of this, agents start with prior probability distributions and move on to posterior distributions by conditioning. 
Priors, as their name indicates, are prior; posteriors follow, derived from priors.

We present a case here for the opposite: posteriors are prior to priors.
One way of being persuaded on this point is to consider the following example.

\begin{example}
	Let the state space be $\O = \{\o_1, \o_2, \o_3, \o_4, \o_5\}$.
	Two players partition $\O$ differently: player 1 has partition $\Pi_1 = \{\{\o_1, \o_2, \o_3\},\{ \o_4, \o_5\}\}$, and player 2 has partition $\Pi_2 = \{\{\o_1, \o_2\}, \{\o_3, \o_4, \o_5\}\}$.
	
	If the players begin with a uniform common prior ascribing probability $1/5$ to each state, then they derive the two posteriors
	\begin{align}
		\label{posteriorExample}
		\r_1 &= \{(1/3, 1/3, 1/3), (1/2,1/2)\} \\ \nonumber
				\r_2 &= \{(1/2,1/2), (1/3,1/3,1/3)\}
	\end{align}
	The posterior type structure above is, by construction, derived from a common prior. 
	This fact has significant behavioural implications for the players with respect to betting, trade, and many other properties, as has amply been studied in the literature.
	
	Consider, however, the possibility that the players begin with two different priors, player 1 with prior $\pi_1 = (1/6,1/6,1/6,1/4,1/4)$ and player 2 with the uniform prior $\pi_2 = (1/5,1/5,1/5,1/5,1/5)$.
	
	In this case, if the two players derive their respective posteriors, they will arrive at exactly the same posterior type structure (\ref{posteriorExample}) above. The behavioural implications will also be identical to those mentioned above, even though the players did not begin with a common prior
	\hfill \myenddefinition
\end{example}

The example illustrates that what counts behaviourally is the posterior, not the common prior.
In greater detail: whether or not the players begin with a common prior is immaterial for their posterior behaviour.
Instead, starting with the posteriors, the question is whether or not the players could have derived their respective posteriors from a common prior.
It is in this sense that posteriors are prior to priors.

This motivates the main question we ask in this paper: Given a posterior type structure, what is the proper definition of prior? Which probability distributions ought to count as priors to a given posterior? Quite simply, what is a prior?

The answer for our question is not mathematical.
We instead apply decision theoretic, economic, and financial considerations in attempts to identify what is a prior.
In the single player setting the answer is clear. 
The existence of a money pump is widely accepted as indicating irrationality.
We therefore should not define a prior in such a way as to leave open the possibility of money pumps.
We show that a prior probability distribution (with respect to a given posterior) excludes the possibility of a money pump if and only if it is disintegrable. Therefore, in the single player setting we conclude that a prior is a disintegrable probability distribution.

In the multiplayer setting, where we focus on finding a proper definition of common prior, the answer is not unequivocal. In this setting we apply all three considerations, decision theory, economics and finance, to find an answer. We analyse three variants of common prior (decision theory considerations), trade (economic considerations) and money pump (finance considerations). 
Our answer is that the proper definition of prior depends on the purpose of the modeller, specifically which decision theoretic or economic or financial considerations are most relevant for her model. 
We provide a complete description of nine different notions, and show how each one relates to the others. With this `map' at hand a modeller can choose the notion that best fits her needs, and can directly trace the decision theoretic, economic and finance implications of each.

We consider finite information structures (see Definition \ref{def1}) in this paper, that is, information structures in which the state space is finite. First we consider the single player setting, and redefine the notions of conglomerability (see Definition \ref{def:conglomerability}) and disintegrability (see Definition \ref{def:disintegrability}) in our setting. We maintain the intuitions of these notions, while differing from the classical definitions used in e.g. \citet{Dubins1975}. In our setting disintegrability implies conglomerability but not vice versa (see Proposition \ref{thm1} and Example \ref{pl}), while in the setting of \citet{Dubins1975} the two notions are equivalent.

We then consider the notion of money pump (see Definition \ref{money pump}), 
regarding it as equivalent to the belief of an uniformed player (outsider)
who has the opportunity to exploit a ``sure win'' arbitrage situation in which
at every state of the world both the single player and the uniformed player expect certain gain. Then we show that a probability distribution is a money pump if and only if it is not disintegrable (see Theorem \ref{thm:noMoneyPump0}). We conclude that a proper definition of prior is the one which excludes money pump, that is one that says that a probability distribution is a prior only if it is disintegrable (see Definition \ref{prior}).    
   
The multiplayer setting is more complicated than the single player case. First we introduce the notion of common certainty component (see Definition \ref{def:cbs}), which is a subset of the states in the states space such that at each state every player believes that the component occurs. In other words, a common certainty component is a set that is commonly believed by the players at each its states. Given an information structure and a common certainty component, the given information structure induces an information structure on the common certainty component (see Proposition \ref{prop:sub}). Therefore, each common certainty component can be regarded as a self-contained, independent (from the other parts of the state space) information structure.  

We use the three considerations mentioned above -- decision theory (common priors), finance (money pumps), and economics (trade) -- one at time, applying each of them to variants of common certainty components. 
This yields nine notions in all, three variants of common prior: common prior (see Definition \ref{def:cp1}), universal common prior (see Definition \ref{def:cp2}), and strong common prior (see Definition \ref{def:cp3}); trade: agreeable trade (see Definition \ref{def:trade2}), weakly agreeable trade (see Definition \ref{def:trade4}), and acceptable trade (see Definition \ref{def:trade5}); and money pump: multiplayer money pump (see Definition \ref{money pump2}), universal multiplayer money pump (see Definition \ref{money pump3}), and strong multiplayer money pump (see Definition \ref{money pump4}).  

Then we consider the relations of the above mentioned nine notions to each other.
This yields no trade theorems\footnote{It is worth to mention that although many often regard no trade and no betting as synonyms, they are not exactly the same, see \citet{GizatulinaHellman2019}.} (Theorems \ref{thm:notrade1}, \ref{thm:notrade2}, and \ref{thm:notrade3}) and no money pump theorems (Theorems \ref{thm:mpcp}, \ref{thm:mpcp2}, and \ref{thm:mpcp3}).
With these theorems at hand a modeller can choose which notion best fits a model in mind. 

In game theory no trade results typically state that an information structure attains a common prior if and only if there is no agreeable trade (bet) over it. This type of result was first presented%
\footnote{\ For a personal overview on the history of no trade results see \citet{Morris2020}.}
in Morris' PhD thesis \citep[Chapter 2]{Morris1991}, later published in \citet{Morris1994}.
Independently,  Yossi Feinberg presented an equivalent result in a working paper \citep{Feinberg1995} and in his PhD Thesis \citep{Feinberg1996} later published in \citet{Feinberg2000}. Both results relate to finite information structures; Morris'  proof is based on the Farkas' lemma \citep{Farkas1902}, while Feinberg's proof relies on the minimax theorem \citep{Neumann1928}. The very elegant paper by \citet{Samet1998} clarifies that at root both results above are based on the Strong Separating Hyperplane Theorem (see e.g. \citet{AliprantisBorder2006} Theorem 5.79).

\citet{Feinberg2000} no betting results consider not only finite information structures but also infinite compact information structures too. For no trade theorems on infinite information structures, see e.g. \citet{Heifetz2006,Hellman2014,LehrerSamet2014,HellmanPinter2022}.

The common prior notions used in \citet{Morris1991,Morris1994,Morris2020} on one side, and the one used in \citet{Feinberg1995,Feinberg1996,Samet1998,Feinberg2000} on the other side are not perfectly the same. We term the one used by \citet{Morris1991,Morris1994,Morris2020} a strong common prior in this paper, while the one by  \citet{Feinberg1995,Feinberg1996,Samet1998,Feinberg2000} we call common prior.

Some of the trade notions in this paper are not new either. The notion of agreeable trade (also called agreeable bet in the literature) was used by \citet{Samet1998}, while the notion of acceptable trade was used by \citet{Morris2020}. The notions of money pump we use in this paper are different in their forms from the ones typically appearing the literature. However, in spirit, those reflect the very same intuition : "sure win".

Hence some of the no trade theorems in this paper are not entirely new. In particular, Theorem \ref{thm:notrade1} can be found in \citet{Samet1998}, and Theorem \ref{thm:notrade2} is by \citet{Morris2020} (however, we give a completely new proof for this result). Regarding our no money pump theorems, those are new in their forms, but at the level of intuition, Theorem \ref{thm:mpcp3} can be found in \citet{Morris2020}.

The setup of the paper is as follows.  In Section \ref{sec:is} we discuss the notion of information structure we use in this paper. In Section \ref{sec:sp} we consider the single player case, and in Section \ref{sec:mpc} the multiplayer case. The last section briefly concludes.

\section{Information structure}\label{sec:is}

Let $N = \{1,\ldots,n\}$ be a set of players and $\Omega=\{\omega_1,\dots,\omega_M\}$ be a set of states of the world.
Each player $i$ is associated with a partition $\Pi_i$ of $\O$, which we call that player's knowledge partition.
We denote by $\Delta (\Omega)$ the collection of probability measures over $\O$, which we identify with the elements of the unit simplex in $\mathbb{R}^M$.

In this section we generally follow notions as they appear in \citet{Samet1998}. 

\begin{definition}
	A \emph{type function} of a player $i \in N$ is a function $t_i \colon \Omega \to \Delta (\Omega)$ that associates to each state $\o$ a distribution in $\Delta(\O)$, the \emph{type} $t_i(\o)$ of player $i$ at that state.
\end{definition}

\begin{definition}\label{def1}
An information structure is a tuple $(\Omega,(\Pi_i , t_i)_{i \in N})$, where $\Pi_i$, player $i$'s \emph{knowledge partition}, is a partition of $\O$, and $t_i \colon \Omega \to \Delta (\Omega)$ is the type function of player $i$. 
We assume that the type function $t_i$ of each player $i$ in an information structure satisfies the following two conditions: 

\begin{itemize}
\item $t_i (\omega)(\pi_i (\omega))=1$ for each $\omega \in \Omega$, where $\pi_i (\omega)$ is the element of the partition $\Pi_i$ which contains $\omega$, that is, $\omega \in  \pi_i (\omega)$,

\item $t_i$ is constant over each element of $\Pi_i$.
\end{itemize}
\end{definition}

In words, an information structure describes each player's knowledge and beliefs. 
For each player $i$, the knowledge partition $\Pi_i$ specifies which states the player can distinguish and which she cannot at each state $\o$. The type function $t_i (\o)$ expresses the belief of player $i$ at state $\o$, in the form of a probability distribution over the state space $\O$. 

Sometimes in the literature, the elements of $\Pi_i$ are referred to as the types of player $i$. 
We will refer to both the probability distribution $t_i (\o)$ and the element of the knowledge partition $\pi_i \in \Pi_i$ as types.
Furthermore, the type functions are also called posteriors, that is, $t_i$ is the \emph{posterior} of player $i$. 

\section{The Single Player Case}\label{sec:sp}

The standard, traditional view in the literature dictates a particular conceptual ordering: one starts with a prior distribution -- a measure over the state space -- from which a posterior distribution is subsequently defined by way of Bayesian updating (relative to a partition). This ordering is inherent in the names -- prior vs posteriors -- of the concepts themselves.

Here, however, we reverse that ordering and ponder instead the following question: given a posterior, what would constitute a prior for that posterior? 
We want the answer to be founded on decision theoretic, finance theoretic, and economics considerations, especially on one of the most fundamental principles in the literature in these fields, namely that money pumps and arbitrage opportunities are to be excluded.


In this section we consider the single player case. Specifically, we work with an information structure $(\Omega,(\Pi_i,t_i)_{i \in N})$ in which the player set $N$ is a singleton, that is, the information structure is $(\Omega,\Pi,t)$. 

We introduce here a variant of the concept of conglomerability. The notion of conglomerability was introduced by \citet{DeFinetti1930,deFinetti1972}. It is typically expressed as a condition imposed on the posterior of a player with respect to a given prior. Since in our model the posteriors are given and the objective is defining corresponding priors, we modify the concept, while maintaining the original intuition, as follows: 

\begin{definition}\label{def:conglomerability}
Let $(\Omega,\Pi,t)$ be a single player information structure. 
A probability distribution $p \in \Delta(\O)$ is \emph{$(\Pi,t)$-conglomerable} if for each $E \subseteq \O$
\begin{equation*}
\min \limits_{\omega \in \O} t (\omega) (E)  \leq p (E) \leq \max \limits_{\omega \in \O} t (\omega) (E) .
\end{equation*}
\end{definition}

In words, a probability distribution is conglomerable with respect to a single player information structure if the minimum and the maximum of the posterior of any event sandwiches the probability of the event. 
A natural case can be made to regard comglomerability as a desirable trait: it is counter-intuitive to conceive of a player ascribing unconditional probability to an event that is strictly greater than (respectively less than) the conditional probability of that same event at each and every possible state.

We also modify the standard definition of the concept of disintegrability in the literature to reflect our setting in which the posteriors are fixed and the objective is identifying priors.

\begin{definition}\label{def:disintegrability}
Let $(\Omega,\Pi,t)$ be a single player information structure.  
A probability distribution $p \in \Delta (\Omega)$ is \emph{$(\Pi,t)$-disintegrable} if 
\begin{equation*}
p (E \cap \pi) =  t (\omega) (E) \, p (\pi) ,
\end{equation*}
\noindent for every $E \subseteq \Omega$, every $\pi \in \Pi$, and all $\omega \in \pi$.
\end{definition}

In words, disintegrability is a generalisation of the familiar law of total probability. 
\citet{Samet1998} (Observation p. 172) demonstrated that a probability distribution is disintegrable if and only if it is located within the convex hull of the types. Formally,

\begin{lemma}\label{lem}
Let $(\Omega,\Pi,t)$ be a single player information structure. 
A probability distribution $p \in \Delta (\Omega)$ is $(\Pi,t)$-disintegrable if and only if $p \in \conv \{t (\o) \colon \o \in \O\}$.
\end{lemma}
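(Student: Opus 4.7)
The plan is to exploit the two conditions in Definition \ref{def1} -- namely that $t(\omega)$ is supported on $\pi(\omega)$ and that $t$ is constant on each element of $\Pi$ -- to convert both directions into a direct computation. Writing $t(\pi) := t(\omega)$ for any (equivalently, every) $\omega \in \pi$, disintegrability takes the clean form $p(E \cap \pi) = t(\pi)(E)\, p(\pi)$ for all $E \subseteq \Omega$ and $\pi \in \Pi$, while the convex hull in question can be rewritten as $\conv\{t(\pi) : \pi \in \Pi\}$. I expect no real obstacle here; the whole statement reduces to matching coefficients between two natural representations of $p$.

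For the forward direction, I would fix an arbitrary $E \subseteq \Omega$, decompose $p(E) = \sum_{\pi \in \Pi} p(E \cap \pi)$ using that $\Pi$ partitions $\Omega$, and then apply the disintegrability identity inside each summand to obtain $p(E) = \sum_{\pi \in \Pi} p(\pi)\, t(\pi)(E)$. Since the coefficients $p(\pi)$ are non-negative and sum to $1$ (as $\Pi$ is a partition and $p$ is a probability measure), this exhibits $p$ as a convex combination of the types, i.e.\ $p \in \conv\{t(\pi) : \pi \in \Pi\} \subseteq \conv\{t(\omega) : \omega \in \Omega\}$.

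For the reverse direction, I would start from a representation $p = \sum_{\pi \in \Pi} \lambda_\pi\, t(\pi)$ with $\lambda_\pi \geq 0$ and $\sum_\pi \lambda_\pi = 1$. The crucial observation is that $t(\pi')(\pi) = 0$ whenever $\pi' \neq \pi$ and $t(\pi)(\pi) = 1$, because the first bullet of Definition \ref{def1} forces $t(\pi')$ to be supported on $\pi'$. Evaluating the convex combination at the set $\pi$ therefore collapses to $p(\pi) = \lambda_\pi$, and evaluating it at $E \cap \pi$ yields $p(E \cap \pi) = \lambda_\pi\, t(\pi)(E \cap \pi) = \lambda_\pi\, t(\pi)(E)$, using once more that $t(\pi)$ is supported on $\pi$. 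Substituting $\lambda_\pi = p(\pi)$ produces precisely the disintegrability identity.

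The only point requiring a little care will be to verify that every $p$ in $\conv\{t(\omega) : \omega \in \Omega\}$ really can be written with one coefficient per element of $\Pi$ rather than per state; this is immediate from the second bullet of Definition \ref{def1}, which makes $t$ constant on each $\pi$, so that repeated occurrences of the same type can be collapsed and their weights summed. Beyond this minor bookkeeping, the argument is essentially a rewriting of the definitions and I do not foresee any genuine difficulty.
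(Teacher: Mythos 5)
Your proof is correct. The paper itself does not reprove this lemma but defers to the Observation on p.\ 172 of \citet{Samet1998}; your argument---decomposing $p(E)=\sum_{\pi\in\Pi}p(E\cap\pi)$ for the forward direction and, conversely, evaluating a convex combination $\sum_{\pi}\lambda_\pi t(\pi)$ at $\pi$ and at $E\cap\pi$ while exploiting that each $t(\pi)$ is supported on $\pi$ and that $t$ is constant on partition cells---is exactly the standard argument underlying that observation, so there is nothing to add.
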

 
\citet{Dubins1975} (pp. 90-91) showed the equivalence of disintegrability and conglomerability in the classical setting.
We show here, however, that this equivalence does not obtain in our setting, where disintegrability implies conglomerability but not vice versa.   

\begin{proposition}\label{thm1}
Let $(\Omega,\Pi,t)$ be a single player information structure.
If a probability distribution $p \in \Delta(\O)$ is $(\Pi,t)$-disintegrable then it is $(\Pi,t)$-conglomerable.
\end{proposition}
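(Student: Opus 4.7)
The plan is to leverage Lemma \ref{lem} directly. By that lemma, $(\Pi,t)$-disintegrability of $p$ is equivalent to $p$ lying in $\conv\{t(\omega) : \omega \in \Omega\}$, so I may write $p = \sum_{\omega \in \Omega} \lambda_\omega\, t(\omega)$ for some nonnegative weights with $\sum_\omega \lambda_\omega = 1$. From here, proving conglomerability should reduce to the familiar fact that a convex combination of real numbers lies between their minimum and their maximum.

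First, I would fix an arbitrary event $E \subseteq \Omega$ and evaluate both sides of the convex combination representation at $E$, obtaining
\begin{equation*}
p(E) \;=\; \sum_{\omega \in \Omega} \lambda_\omega\, t(\omega)(E).
\end{equation*}
Since the right-hand side is a convex combination of the real numbers $\{t(\omega)(E)\}_{\omega \in \Omega}$, it is sandwiched between $\min_{\omega} t(\omega)(E)$ and $\max_{\omega} t(\omega)(E)$. As $E$ was arbitrary, this is exactly the conglomerability inequality of Definition \ref{def:conglomerability}.

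There is no real obstacle here: the heart of the proof is the one-line observation that a convex combination of scalars is bounded by their extremes. The only subtlety worth noting (though strictly speaking not needed for the argument) is that because $t$ is constant on elements of $\Pi$, the convex combination over $\Omega$ can equivalently be expressed as a convex combination indexed by the partition cells $\pi \in \Pi$; this matches the intuition that disintegrability assigns a total weight to each cell and then spreads it according to the common posterior on that cell. The inequality chain, however, follows immediately from convexity without invoking this reformulation.
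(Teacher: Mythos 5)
Your proof is correct and rests on the same core idea as the paper's: showing that $p(E)$ is a convex combination of the posterior values $\{t(\omega)(E)\}$ and invoking the fact that a convex combination of scalars lies between their minimum and maximum. The paper reaches this inline (by contradiction, summing the disintegrability identity $p(E\cap\pi)=t(\omega_\pi)(E)\,p(\pi)$ over the partition cells to obtain weights $p(\pi)$), whereas you route through Lemma \ref{lem} to get weights $\lambda_\omega$ indexed by states; the two decompositions coincide, as you note yourself.
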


\begin{proof}
Suppose that $p$ is not $(\Pi,t)$-conglomerable, that is, that there exists $E \subseteq \O$ such that
\begin{equation*}
\min \limits_{\omega \in \O} t (\omega) (E) > p (E) \mspace{10mu} \text{or} \mspace{10mu} p(E) > \max \limits_{\omega \in \O} t (\omega) (E) .
\end{equation*}

\noindent Without loss of generality we may assume that $p(E) > \max_{\omega \in \O} t (\omega) (E)$. 

Then 

\begin{equation*}
\sum \limits_{\pi \in \Pi} t (\omega_\pi) (E) \, p (\pi) = \sum \limits_{\pi \in \Pi} p (E \cap \pi) = p (E) > \max \limits_{\omega \in \O} t (\omega) (E) ,
\end{equation*}

\noindent which is a contradiction, where $\o_\pi \in \pi$.
\end{proof}

To see that conglomerability does not necessarily imply disintegrability, consider the following example.

\begin{example}\label{pl}
Let $\O = \{\o_1,\o_2,\o_3\}$ be a state space with knowledge partition $\Pi = \{\{\o_1,\o_2\},\{\o_3\}\}$. Consider the type function 
\begin{equation*}
t (\o_1) = t (\o_2) = 
\begin{pmatrix}
0.9 \\
0.1 \\
0
\end{pmatrix} \mspace{10mu} \text{and} \mspace{10mu}
t (\o_3) = 
\begin{pmatrix}
0 \\
0 \\
1
\end{pmatrix} .
\end{equation*}

\noindent Moreover, let

\begin{equation*}
p = 
\begin{pmatrix}
0.1 \\
0 \\
0.9
\end{pmatrix} .
\end{equation*}

It is easy to check that $p$ is $(\Pi,t)$-conglomerable. However $p$ is not a convex combination of the types, hence by Lemma \ref{lem} $p$ is not $(\Pi,t)$-disintegrable.
\hfill \myenddefinition
\end{example} 

\subsection{Priors as a Form of Guarding Against Money Pumps}

In the classical theory of priors and posteriors, it is well known that an agent who fails to implement Bayesian updating for probability conditionalisation in deriving a posterior from a prior is susceptible to a diachronic Dutch book that can be used as a money pump. Here we apply a similar concept to seek a general definition of what a prior is, relative to an information structure.


\begin{definition}\label{money pump}
Let $(\Omega,\Pi,t)$ be a single player information structure. 
A \emph{semi-trade} is a function $f  \colon \Omega \to \R$ such that for all $\o \in \O$

\begin{equation*}
\int f \dd t (\o) \ge 0 .
\end{equation*}
\end{definition}

In words, in the case of a semi-trade the player expects non-negative gain at every state of the world. 

\begin{definition}\label{def:mp}
Given a single player information structure $(\Omega,\Pi,t)$. A probability distribution $p \in \Delta (\O)$ is a \emph{money pump} if there exists a semi-trade $f$ such that

\begin{equation*}
\int f \dd p  < 0 .
\end{equation*}
\end{definition}

The intuition here is that a function is a semi-trade if with respect to the posterior at each $\o$, the player believes that she cannot lose. A probability distribution $p$ is a money pump if there exists a semi-trade $f$ whose expectation with respect to $p$ is negative. Then one can see this as $p$ is the belief of an uninformed player, hence her belief is constant, and both the informed player and the uninformed player expect gain with the trade $(f,-f)$.

The existence of a money pump as defined here can easily be translated into the existence of a Dutch book in the standard sense in the literature, that is, the existence of a trade that an agent is willing to accept even though the agent is guaranteed to lose no matter what is the true state of the world. Suppose that $p$ and $f$ are as in Definition \ref{def:mp}. Then at a point in time before the true state of the world is revealed an agent would be willing to accept the trade $g = -f$ since in expectation $\int g \dd p  > 0 $.
However, \emph{a posteriori} at any state $\o$ such an agent would be guaranteed a loss by the property that $\int g \dd t (\o) < 0$.%
\footnote{\ Even if in Definition \ref{money pump} the expected gain is only non-positive, shifting the considered trade with a sufficiently small constant we get strictly positive gain while maintaining the money pump.} 

The following is a theorem of the alternative, stating that a probability charge is either disintegrability or a money pump, but not both.

\begin{theorem}\label{thm:noMoneyPump0}
Given a single player information structure $(\Omega,\Pi,t)$. Take an arbitrary probability distribution $p \in \Delta (\O)$. Then only one of the following holds:

\begin{enumerate}
\item $p$ is $(\Pi,t)$-disintegrable,

\item  $p$ is a money pump.
\end{enumerate}
\end{theorem}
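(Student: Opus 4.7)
The plan is to prove this theorem of the alternative by leveraging Lemma \ref{lem}, which recasts disintegrability as membership in the convex hull $\conv\{t(\omega) : \omega \in \Omega\}$. Under this reformulation, the statement becomes a standard separation result: $p$ lies in the convex hull of the types if and only if no linear functional separates $p$ strictly from all of them.

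The easy direction is that both conditions cannot hold simultaneously. Suppose $p$ is $(\Pi,t)$-disintegrable. By Lemma \ref{lem} we can write $p = \sum_{\omega \in \Omega} \lambda_\omega \, t(\omega)$ for some convex weights $\lambda_\omega \geq 0$ summing to one. Then for any semi-trade $f$,
\begin{equation*}
\int f \dd p = \sum_{\omega \in \Omega} \lambda_\omega \int f \dd t(\omega) \geq 0,
\end{equation*}
so $p$ cannot be a money pump.

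For the other direction I would suppose $p$ is not $(\Pi,t)$-disintegrable and construct a witnessing semi-trade. By Lemma \ref{lem}, $p \notin \conv\{t(\omega) : \omega \in \Omega\}$. Since this convex hull is the convex hull of finitely many points in $\R^M$, it is a compact convex subset of $\Delta(\Omega)$, and $\{p\}$ is a compact convex set disjoint from it. Applying the Strong Separating Hyperplane Theorem (as invoked in the paper), there is a linear functional on $\R^M$ — represented by a function $g \colon \Omega \to \R$ via $q \mapsto \int g \dd q$ — and a scalar $c \in \R$ such that
\begin{equation*}
\int g \dd p < c < \int g \dd t(\omega) \quad \text{for every } \omega \in \Omega.
\end{equation*}
Setting $f := g - c$ (a constant shift) yields $\int f \dd t(\omega) > 0$ for every $\omega$, so $f$ is a semi-trade, while $\int f \dd p < 0$, showing that $p$ is a money pump.

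The main subtlety, rather than a real obstacle, is making sure to use \emph{strong} separation so that the inequalities on the types side are strict; this is what permits the constant shift to leave every $\int f \dd t(\omega)$ nonnegative (in fact positive) while pushing $\int f \dd p$ below zero. Since $\Omega$ is finite and $\conv\{t(\omega) : \omega \in \Omega\}$ is the convex hull of finitely many points, compactness is automatic and strong separation applies without further hypotheses, so no delicate topological argument is needed.
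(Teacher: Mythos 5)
Your proof is correct, and it differs from the paper's in a way worth noting. The first direction — disintegrable implies not a money pump, via Lemma \ref{lem} and writing $p$ as a convex combination of types — matches the paper exactly. For the other direction, the paper's written proof argues from ``$p$ is a money pump'' to ``$p$ is not disintegrable,'' which is simply the contrapositive of the first direction and so re-establishes only mutual exclusion of the two alternatives. You instead start from ``$p$ is not disintegrable'' and \emph{construct} a witnessing semi-trade by strongly separating $\{p\}$ from the compact polytope $\conv\{t(\omega) : \omega \in \Omega\}$, then shifting the separating functional $g$ by the separating constant $c$ (valid since $\int 1 \dd q = 1$ for every $q \in \Delta(\Omega)$) so that $\int f \dd t(\omega) > 0$ for all $\omega$ while $\int f \dd p < 0$. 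This separation step is what actually delivers exhaustiveness — that at least one alternative must hold — which is the content a theorem of the alternative requires beyond mutual exclusion, so your argument is the more complete one. Your remark that finiteness of $\Omega$ makes the convex hull compact and hence strong separation applicable without further hypotheses is also correct.
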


\begin{proof}
Assume that $p$ is $(\Pi,t)$-disintegrable. Then by Lemma \ref{lem}, $p \in \conv \{t (\o) \colon \o \in \O\}$. Therefore, for every semi-trade $f$ we have $\int f \dd p \geq 0$, hence $p$ is not a money pump.

\bigskip

Now suppose that $p$ is a money pump. Then there exists a semi-trade $f$ such that $\int f \dd p  < 0$. Since for every $p' \in \conv \{t (\o) \colon \o \in \O\}$ and semi-trade $f$ we have $\int f \dd p' \geq 0$. Therefore, $p \notin \conv \{t (\o) \colon \o \in \O\}$, hence by Lemma \ref{lem}, $p$ is not $(\Pi,t)$-disintegrable.
\end{proof}

The following example shows that, in contrast to what Theorem \ref{thm:noMoneyPump0} concludes with regard to a disintegrable distribution, a conglomerable probability distribution can be a money pump.

\begin{example}\label{pl1}
Consider the information structure $(\Pi,t)$ and the probability distribution $p$ from Example \ref{pl}. Then $p$ is $(\Pi,t)$-conglomerable. 

Moreover, let 
\begin{equation*}
f = 
\begin{pmatrix}
-1  \\
9 \\
0
\end{pmatrix}.
\end{equation*}

Then $f$ is a non-negative valued trade. Indeed, $\int f \dd t(\o_1) = \int f \dd t(\o_2) = -0.9 + 0.9 = 0$, and $\int f \dd t(\o_3) = 0$.

Moreover, $p$ is a money pump, since $\int f \dd p = -0.1$. 
\hfill \myenddefinition
\end{example}

Returning to the question with which we started this section -- given a type function of a player $t: \O \to \Delta(\O)$, which probability distributions over $\O$ `ought' to be considered priors relative to $t$ -- we can now use the result of Theorem \ref{thm:noMoneyPump0} to obtain an answer.
Given the foundational assumption that a rationality requires avoiding money pumps, we conclude that the set of priors relative to $t$ are the probability distributions that are $(\Pi,t)$-disintegrable relative, because they are the ones that guard against the possibility of a money pump. 
Conglomerability,in contrast, does not suffice.


\begin{definition}\label{prior}
Let $(\Omega,\Pi,t)$ be a single player information structure. Then a probability distribution $p \in \Delta (\O)$ is a \emph{prior} (with respect to $t$) if it is $(\Pi,t)$-disintegrable.
\end{definition} 

Our definition of a prior is equivalent to the definition appearing in \citet{Samet1998}, where a prior is defined to be a distribution in the convex hull of the types.
However, to the best of our knowledge, this is the first time that a no money pump approach has been used to motivate the definition of a prior, given a posterior. 



\section{The multiplayer setting}\label{sec:mpc}

In this section we extend our focus to the multi-player setting, where the natural question becomes: given the posteriors of the players, what should count as a \emph{common prior}? The answer is not as straightforward as it may appear at first, because the proper notion of money pump is a more complex question in this case. Versions of the concept of money pump have analogues in parallel versions of concepts related to no trade theorems. We will therefore relate our various version of common prior concepts both to money pump theorems and no trade theorems.

We distinguish here three types of no trade theorems: 1) an agreeable trade version \citep{Samet1998,Feinberg2000,HellmanPinter2022}, 2) an agreeing to disagree version \citep{Aumann1976,Contreras-TejadaScarpaKubickiBrandenburgerLaMura2021}, and 3) the one by \citet{Morris2020}. 

In our consideration of these three types of common prior, trade and money pump, we redefine the applied notions of trade from the literature in an epistemic framework, and introduce completely new notions of common prior, trade and money pump.  

\begin{figure}[h!]
\centering
\includegraphics[width=0.8\textwidth]{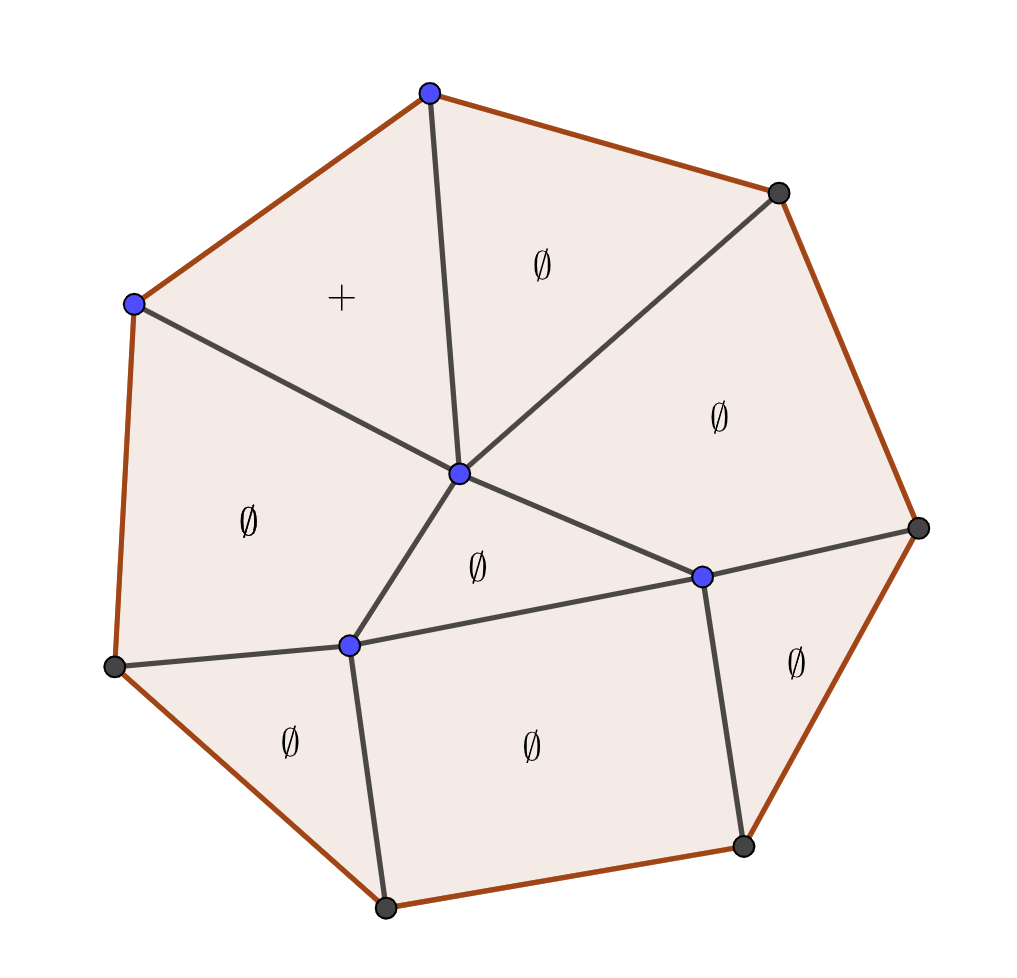}
\caption{Common Prior}\label{kep1}
\end{figure}

We can illustrate the differences among the three notions of common prior graphically. The basic question is what is going on in the common certainty components of the state space (precise definitions of these concepts will appear later; at this point we are only presenting an intuitive picture). 

A common certainty component is a subset of the state space which, from the decision theory perspective, is independent of the other parts of the state space. In other words, we may analyse each common certainty component as a separate and self-contained information structure. Loosely speaking, the common certainty components form a partition of the state space.

In Figure \ref{kep1} the polygons represent the common certainty components of the state space. A plus sign (respectively $\emptyset$) indicates that the given common certainty component attains (respectively, does not attain) a common prior in the sense of \citet{Aumann1976,Samet1998,Feinberg2000,HellmanPinter2022}. Under this version, a common prior exists over the entire space if and only if there exists at least one common certainty component that attains a common prior. 

In Figure \ref{kep1}, although only one common certainty component attains a common prior, that is sufficient for the entire information structure to attain a common prior. Moreover, notice that a common prior is not necessarily meaningful on certain common certainty components, e.g. it is meaningless on a common certainty component whose probability is zero by the common prior. 
 
\begin{figure}[h!]
\centering
\includegraphics[width=0.8\textwidth]{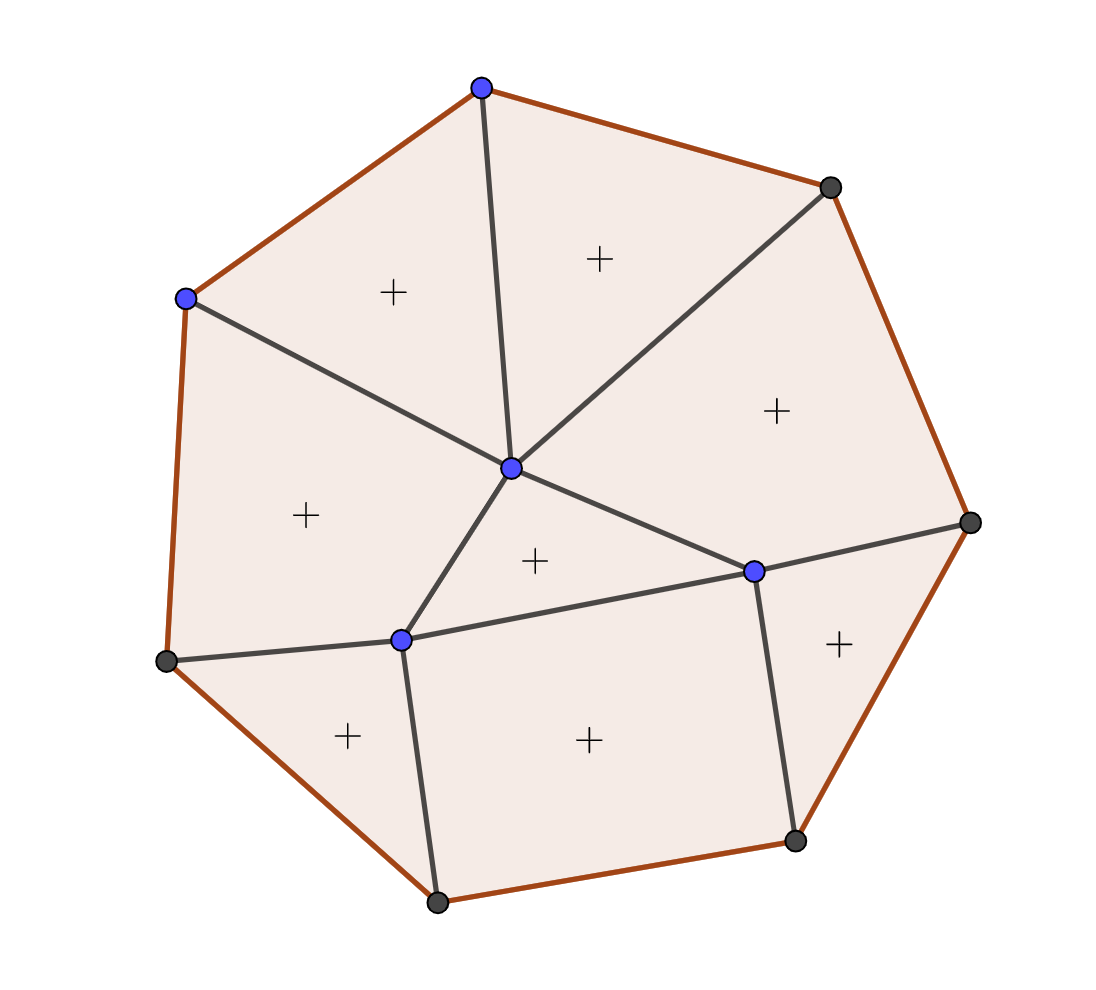}
\caption{Universal Common Prior}\label{kep2}
\end{figure}

Figure \ref{kep2} represents a case in which each and every common certainty component attains a common prior: this corresponds to the existence of a universal common prior. Clearly this notion is stronger than the one depicted in the previous figure. A universal common prior is meaningful on every common belief subspace, this is the difference between common prior and universal common prior. 

\begin{figure}[h!]
\centering
\includegraphics[width=0.8\textwidth]{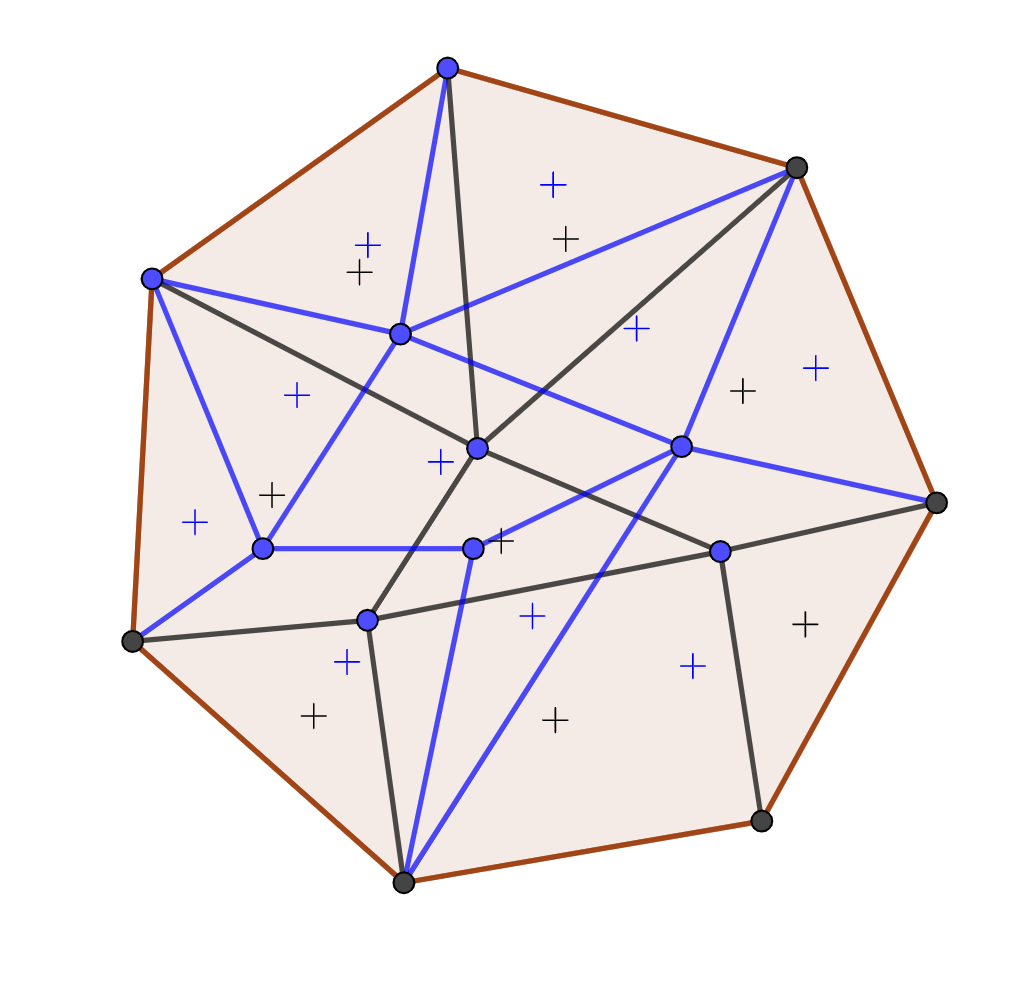}
\caption{Strong Common Prior}\label{kep3}
\end{figure}

In Figure \ref{kep3} we illustrate a situation where there are two players, labelled black and blue. The coloured polygons represent the knowledge partitions of the players (therefore, here the polygons are not common certainty components but elements of the knowledge partitions of the players). A strong common prior is a common prior such that it assigns positive probability to every player's every element of her knowledge partition (every type). It is clear that strong common prior is a stronger notion than universal common prior. 

The relation among the three considered notions as follows: if an information structure attains a strong common prior then it attains a universal common prior, but not vice versa. Furthermore, if an information structure attains a universal common prior then it attains a common prior, but not vice versa.

Our three notions of common prior, which we derive from decision theory considerations, have counterparts based on economic (trade) and finance (money pump) considerations.
Hence we will formalise three no trade theorems (common prior vs. trade) and three no money pump theorems (common prior vs. money pump).


Finally, throughout this section we fix an information structure $T = (\Omega,(\Pi_i,t_i)_{i \in N})$. Furthermore, for each player $i$ denote by $P_i$ the set of priors, according to Definition \ref{prior}). In other words,
\begin{equation*}
P_i = \left\{ p \in \Delta (\O) \colon p (E \cap \pi_i) =  t_i (\o) (E) \, p (\pi_i), \mspace{10mu} \forall E \subseteq \O,\ \pi_i \in \Pi_i \right\}.
\end{equation*}

\subsection{Common Certainty Components}

A common certainty component $S \subseteq \Omega$ is a set of states of the world such that at each state of the world in $S$ every player believes with probability one that $S$ occurs. This notion is inspired by the notion of common certainty in \citet{Contreras-TejadaScarpaKubickiBrandenburgerLaMura2021}, and it is different from the notion called common $1$-belief in \citet{MondererSamet1989}.
Formally, 

\begin{definition}\label{def:cbs}
A non-empty set $S \subseteq \O$ is a common certainty component if for every $\o \in S$ and $i \in N$ there exists $E \subseteq \O$ such that

\begin{itemize}
\item $E \subseteq S$,

\item $t_i (\o) (E) = 1$.
\end{itemize}
\end{definition}

Another way of describing a common certainty component $S$ is saying that at each state in $S$ every player can exclude with probability one the states which are not in $S$. 

Each common certainty component can be considered to be an information structure itself:

\begin{proposition}\label{prop:sub}
Given a common certainty component $S \subseteq \Omega$, let

\begin{itemize}
\item $\Pi_i^S \:= \{ E \subseteq S \colon \text{ there exists } \pi \in \Pi_i \text{ such that } E = \pi \cap S\}$, for all $i \in N$,

\item $t_i^S (\o) \:= t_i (\o)|_S$, for all $\o \in S$ and $i \in N$.
\end{itemize}

Then $(S,(\Pi_i^S,t_i^S)_{i \in N})$ is an information structure.
\end{proposition}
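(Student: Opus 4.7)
The plan is to verify in turn each of the ingredients in Definition \ref{def1}: that $\Pi_i^S$ is a partition of $S$, that $t_i^S(\omega)$ is a well-defined probability distribution on $S$, that $t_i^S(\omega)(\pi_i^S(\omega))=1$, and that $t_i^S$ is constant on the elements of $\Pi_i^S$. All of these will follow almost mechanically once we have established the key auxiliary fact that $t_i(\omega)(S)=1$ for every $\omega \in S$ and every $i \in N$.

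The first step is to derive this key fact. Fix $\omega \in S$ and $i \in N$. By Definition \ref{def:cbs} there exists $E \subseteq S$ with $t_i(\omega)(E)=1$. Monotonicity of the probability measure $t_i(\omega)$ gives $t_i(\omega)(S) \geq t_i(\omega)(E)=1$, hence $t_i(\omega)(S)=1$. Consequently, the restriction $t_i^S(\omega) = t_i(\omega)|_S$ is a probability distribution on $S$, i.e.\ an element of $\Delta(S)$.

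Next I would check that $\Pi_i^S$ is a partition of $S$. Since $\Pi_i$ is a partition of $\Omega$, the family $\{ \pi \cap S : \pi \in \Pi_i\}$ consists of pairwise disjoint sets whose union is $S$; discarding the empty intersections yields $\Pi_i^S$, which is therefore a partition of $S$. In particular, for every $\omega \in S$ the unique element of $\Pi_i^S$ containing $\omega$ is $\pi_i^S(\omega) = \pi_i(\omega) \cap S$.

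Finally I would verify the two axioms of Definition \ref{def1}. For the first axiom, using $t_i(\omega)(\pi_i(\omega))=1$ together with $t_i(\omega)(S)=1$ we obtain $t_i(\omega)(\pi_i(\omega) \cap S)=1$, and therefore $t_i^S(\omega)(\pi_i^S(\omega)) = t_i(\omega)(\pi_i(\omega) \cap S)=1$. For the second axiom, if $\omega,\omega' \in \pi_i(\omega) \cap S$ then $t_i(\omega)=t_i(\omega')$ by the constancy of $t_i$ on elements of $\Pi_i$, whence $t_i^S(\omega)=t_i^S(\omega')$. The main (and really only) conceptual point is the first step, namely extracting $t_i(\omega)(S)=1$ from the witness set $E \subseteq S$; once that is in hand, everything else is a direct bookkeeping check of the definitions.
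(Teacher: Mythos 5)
Your proof is correct and takes essentially the same approach as the paper: both hinge on extracting the key fact $t_i(\omega)(S)=1$ for $\omega \in S$ from the witness set $E \subseteq S$ with $t_i(\omega)(E)=1$ in Definition~\ref{def:cbs}. The paper phrases its argument as a well-definedness claim (if $E \cap S = F \cap S$ then $t_i(\omega)(E)=t_i(\omega)(F)$) and leaves the remaining bookkeeping implicit, whereas you check each condition of Definition~\ref{def1} explicitly, but the underlying idea is identical.
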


\begin{proof}
The only thing to prove is that $t_i^S$ meets Definition \ref{def1}. Suppose by contradiction that there exist a player $i \in N$ and a state $\o \in S$ and events $E,F \subseteq \O$ such that $S \cap E = S \cap F$ and $t_i (\o) (E) \neq t_i (\o) (F)$.

Since $S \cap E = S \cap F$ we have that $(E \setminus F) \cup (F \setminus E) \subseteq S^\complement$. Moreover, $S$ is a common certainty component, hence there exists $K \subseteq S$ such that $t_i (\o) (K) = 1$, meaning that $t_i (\o) ((E \setminus F) \cup (F \setminus E)) = 0$ which is a contradiction.  
\end{proof}

It is clear that an empty set cannot be a common certainty component, hence the intersection of two common certainty components is not necessarily a common certainty component. However, it is easy to check that the union of common certainty components is a common certainty component.

\begin{lemma}
The union of common certainty components is a common certainty component, but the intersection of common certainty components is not necessarily a common certainty component.
\end{lemma}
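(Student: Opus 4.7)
The plan is to handle the two claims separately, with the union direction being a direct verification from the definition and the intersection direction being settled by a minimal counterexample.

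For the union, suppose $\{S_\alpha\}_{\alpha \in A}$ is any family of common certainty components and set $S = \bigcup_{\alpha \in A} S_\alpha$. I first note that $S$ is non-empty since each $S_\alpha$ is. To verify the defining condition, fix any $\omega \in S$ and any player $i \in N$. By definition of the union there exists some $\alpha_0$ with $\omega \in S_{\alpha_0}$, and since $S_{\alpha_0}$ is a common certainty component, there is an event $E \subseteq S_{\alpha_0}$ with $t_i(\omega)(E) = 1$. Since $S_{\alpha_0} \subseteq S$, the same $E$ witnesses the condition for $S$. This is the whole argument; no subtlety is hidden here.

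For the intersection, the simplest obstruction is that the empty set is explicitly excluded from being a common certainty component. So it suffices to produce an information structure in which there are two common certainty components whose intersection is empty. I would take $\Omega = \{\omega_1, \omega_2\}$, a single player with partition $\Pi = \{\{\omega_1\}, \{\omega_2\}\}$, and type function $t(\omega_1) = (1,0)$, $t(\omega_2) = (0,1)$. Then both $\{\omega_1\}$ and $\{\omega_2\}$ clearly satisfy Definition \ref{def:cbs}, but $\{\omega_1\} \cap \{\omega_2\} = \emptyset$, which by the non-emptiness clause in Definition \ref{def:cbs} is not a common certainty component.

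I do not foresee any real obstacle. The only point meriting a sentence of care is that the union-closure works for arbitrary (including infinite) unions, since the argument only ever uses one member of the family at a time. The counterexample could alternatively be given in the multi-player setting, but the single-player version already suffices because Proposition \ref{prop:sub} and Definition \ref{def:cbs} apply verbatim when $|N|=1$, so there is no need to complicate matters.
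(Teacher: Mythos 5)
Your proof is correct and follows essentially the same reasoning the paper gestures at: the union argument is the direct verification the paper calls ``easy to check,'' and the intersection counterexample is precisely the observation (made just before the lemma) that two common certainty components can be disjoint, in which case their intersection is empty and hence excluded by the non-emptiness clause of Definition \ref{def:cbs}. The only difference is that you supply an explicit two-state, one-player instance, whereas the paper leaves the counterexample implicit; both are fine.
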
  

The following example demonstrates that the complement of a common certainty component is not necessarily a common certainty component either. 

\begin{example}\label{ex:pl1}
Consider the information structure in Figure \ref{fig:fig1}. Here there are four common certainty components: $\{\o_1\}$, $\{\o_4\}$, $\{\o_1,\o_4\}$ and $\O$.

%


\begin{figure}[h!]
	\centering
{
	\begin{tabular}{l|c c c c c c c c }
		\hbox{Anne} &  1   &  \vline  &  $\frac{1}{2}$ &  &  $\frac{1}{2}$ & \vline & $1$ &  \vline  		
		\\ \hline
		\hbox{Ben } &  $1$ &  & $0$ & \vline  & $0$  & &  $1$ & \vline  \\ 
	\end{tabular}
}
	\caption{\small{The informations structure of Example \ref{ex:pl1}.}}\label{fig:fig1}
\end{figure}

It is clear that the complement of any common certainty component is not a common certainty component.
\end{example}

\subsection{Decision Theory Considerations: Common Prior}

In this subsection we introduce three variants of common prior that have appeared in  the literature. 


The first notion of common prior we identify is the one used by \citet{Samet1998}, which is the closest to the plain meaning of `common prior': a probability distribution that is simultaneously a prior from the perspective of all the players. This is the weakest notion amongst the notions under consideration here.

\begin{definition}\label{def:cp1}
A probability charge $p \in \Delta (\O)$ is a common prior if it is a prior for every player, that is, if $p \in \cap_{i \in N} P_i$.
\end{definition}


In the next example we present an information structure which does not attain a common prior.

\begin{example}[An information structure without a common prior]\label{ex:pl2}
Let the player set be $N = \{1,2\}$, the state space be $\Omega = \{\o_1,\o_2,\o_3,\o_4\}$. Consider two partitions:
\begin{eqnarray*}
\Pi_1=&\{\{\omega_1,\omega_2\}\{\omega_3,\omega_4\}\}, \\
\Pi_2=&\{\{\omega_1,\omega_4\}\{\omega_2,\omega_3\}\}.
\end{eqnarray*}

Let the type functions be
\begin{eqnarray*}
t_1(\omega_1)=&t_1(\omega_2)=&(1/2,\  1/2,\  0,\  0), \\
t_1(\omega_3)=&t_1(\omega_4)=&(0,\  0,\  1/2,\  1/2).
\end{eqnarray*}
\begin{eqnarray*}
t_2(\omega_1)=&t_2(\omega_4)=&(1/2,\  0,\  0,\  1/2), \\
t_2(\omega_2)=&t_2(\omega_3)=&(0,\  1,\  0,\  0).
\end{eqnarray*}

If $p$ were a common prior then $p (\{\o_1\}) = p (\{\o_2\}) = p (\{\o_3\}) = p (\{\o_4\})$, and $p (\{\o_3\}) = 0$, which is a contradiction. Hence this information structure does not attain a common prior.
\hfill \myenddefinition
\end{example}

In the following proposition we give some intuition motivating Definition \ref{def:cp1}. An information strucutre attains a common prior if there is a common certainty component on which the induced information strucutre attains a common prior.

\begin{proposition}\label{prop:prop1}
An information structure over $\O$ attains a common prior if and only if there exists a common certainty component $S \subseteq \O$ such that the induced information structure over $S$ attains a common prior.
\end{proposition}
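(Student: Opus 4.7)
The proof splits naturally into two directions, and in each direction the object to be exhibited is essentially canonical.

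For the ``if'' direction, suppose $q \in \Delta(S)$ is a common prior of the induced information structure $(S,(\Pi_i^S,t_i^S)_{i\in N})$. The plan is to define $p \in \Delta(\Omega)$ by $p(A) \:= q(A \cap S)$ and verify $p \in P_i$ for every $i \in N$. Fix $i$, $E \subseteq \Omega$, $\pi_i \in \Pi_i$ and $\omega \in \pi_i$. If $\pi_i \cap S = \emptyset$, both sides of the disintegration identity vanish. Otherwise pick $\omega' \in \pi_i \cap S$; because $S$ is a common certainty component, $t_i(\omega')$ is concentrated on some subset of $S$, and since $t_i$ is constant on $\pi_i$, the same holds for $t_i(\omega)$. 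Consequently $t_i(\omega)(E) = t_i^S(\omega')(E\cap S)$. Applying the disintegration property of $q$ to $F = E \cap S$ and $\pi_i^S = \pi_i \cap S$ then gives $p(E \cap \pi_i) = q(E \cap \pi_i \cap S) = t_i^S(\omega')(E\cap S)\,q(\pi_i \cap S) = t_i(\omega)(E)\,p(\pi_i)$, as required.

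For the ``only if'' direction, suppose $p \in \Delta(\Omega)$ is a common prior and set $S \:= \supp(p)$. The plan is, first, to show that $S$ is a common certainty component, and second, that the restriction $q \:= p|_S$ is a common prior of the induced structure. For the first: fix $\omega \in S$ and $i \in N$. Since $\omega \in \supp(p)$, $p(\pi_i(\omega)) > 0$, so the disintegration identity rearranges to $t_i(\omega)(E) = p(E \cap \pi_i(\omega))/p(\pi_i(\omega))$ for all $E$. Taking $E = S$ gives $t_i(\omega)(S) = 1$ since $p$ is concentrated on $S$; thus the witness required by Definition \ref{def:cbs} exists (taking $E=S$).

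For the second part, let $F \subseteq S$, let $\pi_i^S = \pi_i \cap S \in \Pi_i^S$, and let $\omega \in \pi_i^S$. Because $F \subseteq S$ and $p$ is concentrated on $S$, $q(F \cap \pi_i^S) = p(F \cap \pi_i)$ and $q(\pi_i^S) = p(\pi_i)$. Also $t_i^S(\omega)(F) = t_i(\omega)(F)$ by definition. Hence the desired identity $q(F\cap \pi_i^S) = t_i^S(\omega)(F)\,q(\pi_i^S)$ is just the disintegration identity for $p$ at $i,\pi_i,E=F,\omega$. The main subtlety throughout is the bookkeeping between $t_i$ and its restriction $t_i^S$, which is controlled by the fact that on the support $S$ the type $t_i(\omega)$ is already concentrated on $S$; once this is observed, both directions are essentially routine.
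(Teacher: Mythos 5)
Your proof is correct, and the ``if'' direction coincides with the paper's approach: the paper isolates exactly your extension argument as Lemma~\ref{lem:lem1} (define $p(A) = p_S(A \cap S)$ and verify the disintegration identity, using that on $\pi_i \cap S$ the type $t_i(\omega)$ is already concentrated on $S$). Where you diverge is the ``only if'' direction. The paper's proof of that direction is one line: $\Omega$ is trivially a common certainty component, and the structure induced on $\Omega$ is $T$ itself, so if $T$ attains a common prior the required $S$ exists with $S = \Omega$. You instead take $S = \supp(p)$, show that the support of any common prior is a common certainty component (via $t_i(\omega)(S) = p(S \cap \pi_i(\omega))/p(\pi_i(\omega)) = 1$, using that $p(\pi_i(\omega)) > 0$ for $\omega$ in the support), and show that $p|_S$ is a common prior for $T_S$. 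Both arguments are valid; the paper's is minimal, while yours proves a strictly stronger structural fact (the support of a common prior is itself a common certainty component on which the restricted prior is again common) that could be useful elsewhere but is more than the existential statement requires.
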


First we need the following lemma. If the induced informations structure over a common certainty component attains a common prior, then by nullifying the probability of all events in the complement of the common certainty component we get a prior for the original informational structure. Formally:

\begin{lemma}\label{lem:lem1}
Suppose that $S$ is a common certainty component such that $T_S = (S,(\Pi_i^S),t_i^S)_{i \in N})$ attains a common prior. Then $T$ (the original informational structure) attains a common prior.
\end{lemma}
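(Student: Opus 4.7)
The natural approach is to extend the common prior $q$ of $T_S$ to a distribution $p$ on $\Omega$ by the ``zero extension'', i.e.\ set $p(A) = q(A \cap S)$ for every $A \subseteq \Omega$. Clearly $p \in \Delta(\Omega)$, since $p(\Omega) = q(S) = 1$. I then need to verify the disintegrability condition of Definition \ref{def:disintegrability} for $p$ with respect to every player $i$: that $p(E \cap \pi_i) = t_i(\omega)(E)\, p(\pi_i)$ for every $E \subseteq \Omega$, every $\pi_i \in \Pi_i$, and every $\omega \in \pi_i$.

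The verification splits into two cases according to the relation between $\pi_i$ and $S$. If $\pi_i \cap S = \emptyset$, then $p(\pi_i) = q(\emptyset) = 0$ and likewise $p(E \cap \pi_i) = 0$, so both sides of the disintegrability identity are zero and the condition holds trivially. The substantive case is $\pi_i \cap S \neq \emptyset$. Since $t_i$ is constant on $\pi_i$, we may without loss of generality pick $\omega \in \pi_i \cap S$ when evaluating the right-hand side. Because $S$ is a common certainty component, Definition \ref{def:cbs} yields $t_i(\omega)(S) = 1$, and consequently $t_i(\omega)(E) = t_i(\omega)(E \cap S)$. On the other hand, $\pi_i \cap S$ is by construction an element of $\Pi_i^S$, and $t_i^S(\omega)$ is the restriction $t_i(\omega)|_S$, so the disintegrability of $q$ with respect to player $i$ in $T_S$ applied to the set $E \cap S \subseteq S$ gives
\begin{equation*}
q\bigl((E \cap S) \cap (\pi_i \cap S)\bigr) = t_i^S(\omega)(E \cap S)\, q(\pi_i \cap S).
\end{equation*}
Translating back via the definition of $p$, the left-hand side is $p(E \cap \pi_i)$, while the right-hand side equals $t_i(\omega)(E \cap S)\, p(\pi_i) = t_i(\omega)(E)\, p(\pi_i)$. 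This is exactly the disintegrability condition for $p$.

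Hence $p \in P_i$ for every $i \in N$, so $p$ is a common prior of $T$ in the sense of Definition \ref{def:cp1}. The only delicate point, and the one I expect to be the main obstacle to present cleanly, is the case where $\pi_i$ straddles $S$ and its complement: here one must use the common certainty property of $S$ to argue that $t_i(\omega)$ ignores $S^\complement$ whenever $\omega \in \pi_i \cap S$, so that no probability mass is ``lost'' when passing from $t_i$ to its restriction $t_i^S$. Once this is handled, the rest is a direct translation between $T$ and $T_S$.
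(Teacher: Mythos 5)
Your proof is correct and uses essentially the same approach as the paper: the zero extension $p(A) = q(A \cap S)$, and verification of disintegrability by restricting to $S$ and using the common certainty property to show $t_i(\omega)(E) = t_i(\omega)(E \cap S)$ for $\omega \in \pi_i \cap S$. You are in fact slightly more careful than the paper's terse proof in explicitly separating off the degenerate case $\pi_i \cap S = \emptyset$, where both sides of the disintegrability identity vanish.
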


\begin{proof}
Let $S$ be a common certainty component and $p_S$ be a common prior for $T_S$. For each $E \subseteq \O$ let $p (E) = p_S (E \cap S)$; then $p \in \Delta (\O)$. 

Next we show that $p$ is a common prior for $T$. Let $i \in N$, $E \subseteq \O$, and $\pi_i \in \Pi_i$. Then
\begin{equation*}
p (E \cap \pi_i)  = p_S (E \cap \pi_i \cap S)  =  t_i^S (\o_{\pi_i^S}) (E \cap S) \, p_S (\pi_i^S) =  t_i (\o_{\pi}) (E) \, p  (\pi_i) , 
\end{equation*}

\noindent for all $\o_{\pi_i^S} \in \pi_i^S$ and $\o_{\pi_i} \in \pi_i$, where $\pi_i^S = \pi_i \cap S$.
\end{proof}

\begin{proof}[The proof of Proposition \ref{prop:prop1}]
Since $\O$ is a common certainty component, if $p$ is a common prior over $\O$ then it is a common prior for the informational strucutre $T$.

\bigskip

If $S$ is a common certainty component and $p_S$ is a common prior for $T_S$ then by Lemma \ref{lem:lem1} the information structure $T$ attains a common prior. 
\end{proof}


The following notion of common prior, that of a \emph{universal common prior}, has no antecedent in the literature. A universal common prior is a common prior that does not vanish on any common certainty component. This notion fits well with the setting of \citet{Aumann1976}, where common knowledge components must be assigned positive probability under the prior.

\begin{definition}\label{def:cp2}
An information structure $T$ admits a \emph{universal common prior} if there exists a common prior $p \in \Delta (\O)$ such that $p (S) > 0$ for every common certainty component.
\end{definition}


The following example shows that it is possible for an information strucutre to attain a common prior but not a universal common prior.  

\begin{example}[Universal common prior $\neq$ common prior] \label{pl4}
Let the player set be $N = \{1,2\}$, the state space be $\Omega = \{\o_1,\o_2,\o_3,\o_4\}$, and the two partitions be the same:
\begin{eqnarray*}
\Pi=&\{\{\omega_1,\omega_2\}\{\omega_3,\omega_4\}\} .
\end{eqnarray*}

Let the type functions be
\begin{eqnarray*}
t_1(\omega_1)=&t_1(\omega_2)=&(1/2,\  1/2,\  0,\  0), \\
t_1(\omega_3)=&t_1(\omega_4)=&(0,\  0,\  1/2,\  1/2).
\end{eqnarray*}
\begin{eqnarray*}
t_2(\omega_1)=&t_2(\omega_2)=&(1/2,\  1/2,\  0,\  0), \\
t_2(\omega_3)=&t_2(\omega_4)=&(0,\  0,\  1,\  0).
\end{eqnarray*}

Here $p = (1/2,\ 1/2,\ 0,\ 0)$ is the unique common prior, but the information structure does not admit a universal common prior since $p$ does not assign positive probability to the common certainty component $S = \{\o_3,\o_4\}$.
\hfill \myenddefinition
\end{example} 

An information structure admits a universal common prior if for each common certainty component the induced informational structure attains a common prior. Formally,

\begin{proposition}\label{prop:prop2}
An information structure $T$ admits a universal common prior if and only if for each common certainty component $S$ the information structure $T_S$ attains a common prior.
\end{proposition}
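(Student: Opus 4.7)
The plan is to handle both directions separately. For the forward direction $(\Rightarrow)$, assume $p$ is a universal common prior of $T$ and fix any common certainty component $S$. Since $p(S) > 0$, define $p_S(E) := p(E)/p(S)$ for $E \subseteq S$. To verify $p_S$ is a common prior for $T_S$, I fix a player $i$, a cell $\pi_i \in \Pi_i$ meeting $S$, and a state $\omega \in \pi_i \cap S$. The common-certainty property of $S$ gives $t_i(\omega)(S) = 1$, hence $t_i(\omega)(S^c) = 0$; applying the disintegrability of $p$ with the set $S^c$ yields $p(\pi_i \cap S^c) = t_i(\omega)(S^c) \cdot p(\pi_i) = 0$, so $p(\pi_i) = p(\pi_i \cap S)$. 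Combined with the obvious identity $t_i^S(\omega)(E) = t_i(\omega)(E)$ for $E \subseteq S$, the disintegrability of $p_S$ over $T_S$ then reduces algebraically to that of $p$ over $T$.

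For the backward direction $(\Leftarrow)$, the strategy is to partition $\Omega$ into minimal common certainty components and paste local common priors together with strictly positive weights. First I establish the auxiliary fact that the intersection of two common certainty components is again a common certainty component whenever non-empty (intersect the witnessing subsets from Definition \ref{def:cbs} at a shared state). Since $\Omega$ is finite, for each $\omega \in \Omega$ the intersection of all common certainty components containing $\omega$ is the unique minimal such component $S(\omega)$. By minimality, any two such minimal components are either equal or disjoint, and they cover $\Omega$, hence form a partition $\{S_1,\ldots,S_k\}$ of $\Omega$. For each $j$ I pick a common prior $p_j$ for $T_{S_j}$ (which exists by hypothesis), extend it by zero to an element of $\Delta(\Omega)$, choose positive weights $\alpha_1,\ldots,\alpha_k$ summing to one, and set $p := \sum_j \alpha_j p_j$.

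The main task is then to check that $p$ is a common prior for $T$ and assigns positive mass to every common certainty component. The crucial combinatorial step is that each $\pi_i \in \Pi_i$ lies inside exactly one $S_j$: if $\pi_i$ met both $S_j$ and $S_{j'}$ with $j \neq j'$, then for any $\omega \in \pi_i$ the constancy of $t_i$ on $\pi_i$ together with the common-certainty property at states in $S_j$ and in $S_{j'}$ would force $t_i(\omega)(S_j) = t_i(\omega)(S_{j'}) = 1$, contradicting $S_j \cap S_{j'} = \emptyset$. With this containment $\pi_i \subseteq S_{j_0}$ in hand, the disintegrability identity for $p$ at $\pi_i$ collapses to the disintegrability identity for $p_{j_0}$ in $T_{S_{j_0}}$, using $t_i(\omega)(S_{j_0}) = 1$ to replace $t_i(\omega)(E)$ by $t_i^{S_{j_0}}(\omega)(E \cap S_{j_0})$. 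Finally, universality follows because any common certainty component $S$ contains $S(\omega) = S_j$ for any $\omega \in S$ (by minimality of $S(\omega)$ applied to the non-empty common certainty component $S \cap S(\omega)$), whence $p(S) \geq \alpha_j > 0$. The main obstacle is setting up the partition into minimal components and proving the containment $\pi_i \subseteq S_{j_0}$; once these are in place, everything else is a routine unwinding of the definitions.
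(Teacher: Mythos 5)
Your proof is correct, and on the backward direction it takes a genuinely different route from the paper. The paper simply invokes Lemma~\ref{lem:lem1}: for each common certainty component $S$ it extends a common prior $p_S$ of $T_S$ by zero to a common prior $p^S$ of $T$, then observes that any strict convex combination $\sum_S \alpha_S p^S$ (over \emph{all} common certainty components, with every $\alpha_S > 0$) lies in the convex set of common priors and assigns mass at least $\alpha_S$ to each $S$, hence is universal. You instead build a partition of $\Omega$ into \emph{minimal} common certainty components, which requires your (correct) auxiliary observations that a non-empty intersection of common certainty components is again one and that each $\pi_i$ is entirely contained in a unique minimal component; you then verify the disintegrability identity for the pasted measure directly rather than appealing to Lemma~\ref{lem:lem1}. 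Your route is more self-contained and exposes a structural fact (the canonical partition into minimal components) that the paper never states, at the cost of more bookkeeping; the paper's route is shorter precisely because it tolerates redundancy in the index set and lets convexity and Lemma~\ref{lem:lem1} do the work. Your forward direction also spells out the verification (using $t_i(\omega)(S^c)=0$ to show $p(\pi_i)=p(\pi_i\cap S)$) that the paper leaves as ``by definition,'' which is a welcome addition.
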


\begin{proof}
By definition, if $T$ admits a  universal common prior $p$, then for each common certainty component $S$ the probability distribution $p|_S$ is a common prior for the information structure $T_S$, where 
\begin{equation*}
p|_S (E) := \dfrac{p (E)}{p (S)} \,, 
\end{equation*}

\noindent $E \subseteq S$.

\bigskip

If for every common certainty component $S$ the induced information structure $T_S$ attains a common prior $p_S$, then by Lemma \ref{lem:lem1} $T$ also attains a common prior. Let $p^S$ denote these common priors. Since there are finite many common certainty components, any strict convex combination (every weight is positive) of $p^S$ is a universal common prior for the information structure $T$.
\end{proof}


The last notion of common prior that we consider first appeared in the literature in \citet{Morris2020}. The inspiration for this concept comes from noting that even when when an information structure admits a universal common prior it may happen that the posterior of a player at a particular state is `hidden' from the universal common prior, in the sense that it assigns zero probability to the type of the player. The following notion excludes this case. 

\begin{definition}\label{def:cp3}
An information strucutre $T$ \emph{admits a strong common prior} if there exists a common prior $p$ such that $p (\pi_i) > 0$ for every player $i \in N$ and type $\pi_i \in \Pi_i$.
\end{definition}


The next example shows that not every universal common prior is a strong common prior.

\begin{example}[Universal common prior $\neq$ strong common prior]\label{ex:pl3}
Consider the information structure in Example \ref{ex:pl1}. 

It is easy to see that the set of the common priors is $\conv \{\delta_{\o_1},\delta_{\o_4}\}$, the convex hull of $\delta_{\o_1}$ and $\delta_{\o_4}$. Therefore, every common prior assigns probability $0$ to type $2$ of Anne, ($\{\o_2,\o_3\}$).
Hence this information structure does not admit a strong common prior.

However, $\frac{1}{2}\delta_{\o_1} + \frac{1}{2} \delta_{\o_4}$ is a universal common prior since it assigns positive probability to each common certainty component $\{\o_1\}$, $\{\o_4\}$, $\{\o_1,\o_4\}$, and $\O$.
\hfill \myenddefinition
\end{example}

Directly from the definitions of the three notions of common prior introduced above it follows that they are related to each other in the following way:

\begin{proposition}\label{has1}
If an information structure admits a strong common prior then it admits a universal common prior. Moreover, if an information structure admits a universal common prior then it admits a common prior.
\end{proposition}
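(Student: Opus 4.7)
The plan is to handle the two implications separately; both are short, and the second one is immediate.

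First I would dispatch the easy direction. By Definition \ref{def:cp2}, a universal common prior is in particular a common prior, so any information structure that admits a universal common prior automatically admits a common prior. Nothing further to prove.

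For the main direction, suppose $T$ admits a strong common prior $p$, so $p \in \bigcap_{i\in N} P_i$ and $p(\pi_i) > 0$ for every player $i$ and every $\pi_i \in \Pi_i$. I need to show that $p(S) > 0$ for every common certainty component $S$. Fix such an $S$ and pick any $\omega \in S$. By Definition \ref{def:cbs}, for some (any) player $i \in N$ there exists $E \subseteq S$ with $t_i(\omega)(E) = 1$; consequently $t_i(\omega)(S) = 1$ and hence $t_i(\omega)(S^{\complement}) = 0$. Let $\pi_i = \pi_i(\omega)$. Since $p \in P_i$, i.e.\ $p$ is $(\Pi_i,t_i)$-disintegrable, applying Definition \ref{def:disintegrability} with the event $S^{\complement}$ gives
\begin{equation*}
p(S^{\complement} \cap \pi_i) \;=\; t_i(\omega)(S^{\complement})\, p(\pi_i) \;=\; 0.
\end{equation*}
Therefore $p(\pi_i) = p(S \cap \pi_i) + p(S^{\complement} \cap \pi_i) = p(S \cap \pi_i) \le p(S)$. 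Because $p$ is a strong common prior, $p(\pi_i) > 0$, and so $p(S) > 0$, as required.

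There is no real obstacle here; the only conceptual point is recognising that the defining property of a common certainty component -- that each player's posterior assigns probability one to (a subset of) $S$ at every $\omega \in S$ -- combines with the disintegrability of the prior to force all of the mass of $\pi_i$ to lie inside $S$, which is exactly what is needed to transfer strict positivity from types to common certainty components.
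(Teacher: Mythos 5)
Your proof is correct. The paper asserts Proposition \ref{has1} without a written proof, calling it a direct consequence of the definitions; your argument is the natural way to make the first implication precise. The second implication is indeed immediate from Definition \ref{def:cp2}, and for the first you correctly identify the one non-trivial point: a strong common prior being positive on each partition cell $\pi_i$ does not by itself give positivity on a common certainty component $S$ (since $\pi_i(\omega)$ need not be contained in $S$ even when $\omega \in S$), so one must use disintegrability together with $t_i(\omega)(S^{\complement}) = 0$ to conclude $p(\pi_i \cap S^{\complement}) = 0$ and hence $p(S) \ge p(\pi_i) > 0$. That is exactly the right bridge, and the step $t_i(\omega)(E)=1 \Rightarrow t_i(\omega)(S^{\complement})=0$ is correctly drawn from Definition \ref{def:cbs}.
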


Notice that by Examples \ref{pl4} and \ref{ex:pl3} the statements above cannot be reversed, that is, not every common prior is a universal common prior, and not every universal common prior is a strong common prior.

\subsection{Economics Considerations: Trade}

In this subsection we consider three notions of trade (also called bet in the literature).
Each of them is an economic counterpart of a common prior from the previous subsection. Each notion expresses a scenario in which trade happens.

\begin{definition}\label{def:trade1}
A \emph{trade} $f=(f_{i})_{i \in N}$ is a family of functions $f_i$, $i \in N$, such that

\begin{equation*}
\sum_{i \in I} f_i(\omega)  \leq 0 .
\end{equation*}
\end{definition}

The following notion was introduced by \citet{Samet1998}. 
\begin{definition}\label{def:trade2}
A trade  $f=(f_{i})_{i \in N}$ is \emph{agreeable} if
\begin{equation*}
\int f_i \dd t_i (\o) > 0 .
\end{equation*}
for every state of the world $\o \in \O$ and every player $i \in N$.

\end{definition}

In words, a trade is agreeable if at every state of the world each player believes that she gains positive payoff. 

\begin{example}\label{ex:plbet1}
Consider the information structure from Example \ref{ex:pl2}. Then the pair $f_1 = (2,-1,4,-3)$ and $f_2 = - f_1$ is an agreeable trade.
\hfill \myenddefinition
\end{example}

We say that an event $E \subseteq \O$ is \emph{commonly certain} at a state of world $\o \in \O$ if there exists a common certainty component $S$ such that $\o \in S \subseteq E$. 
Another, equivalent definition of agreeable trade states that a trade is agreeable if at each state of world the players are commonly certain that each of them gains positive payoff.

\begin{definition}\label{def:trade3}
A trade  $f=(f_{i})_{i \in N}$ is \emph{agreeable} if the event $\{ \o' \in \O \colon \int f_i \dd t_i (\o') > 0, \ i \in N \}$ is commonly certain at every $\o \in \O$. 
\end{definition}

It is easy to see the following:

\begin{lemma}
Definitions \ref{def:trade2} and \ref{def:trade3} are equivalent.
\end{lemma}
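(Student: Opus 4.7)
The plan is to show the two definitions coincide by a direct unpacking, using only two observations: (i) that the whole state space $\Omega$ is itself a common certainty component, and (ii) that being commonly certain at a state $\o$ implies membership of $\o$ in the event in question.

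First I would establish observation (i). By Definition \ref{def1}, for every $i \in N$ and every $\o \in \Omega$, one has $t_i(\o)(\pi_i(\o)) = 1$, and $\pi_i(\o) \subseteq \Omega$. Taking $E := \pi_i(\o)$ in Definition \ref{def:cbs} witnesses that $\Omega$ is a common certainty component.

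Now for the forward implication (\ref{def:trade2} $\Rightarrow$ \ref{def:trade3}), assume $\int f_i \dd t_i(\o) > 0$ for every $\o \in \Omega$ and every $i \in N$. Then the set
\begin{equation*}
A := \bigl\{ \o' \in \O \colon \textstyle\int f_i \dd t_i (\o') > 0, \ i \in N \bigr\}
\end{equation*}
equals $\Omega$. For any $\o \in \Omega$, observation (i) allows us to take $S = \Omega$ so that $\o \in S \subseteq A$, so $A$ is commonly certain at every state.

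For the converse (\ref{def:trade3} $\Rightarrow$ \ref{def:trade2}), suppose $A$ is commonly certain at every $\o \in \Omega$. Then for each $\o$ there exists a common certainty component $S_\o$ with $\o \in S_\o \subseteq A$; in particular $\o \in A$. As this holds for every $\o$, we conclude $A = \Omega$, which is exactly Definition \ref{def:trade2}.

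I do not anticipate any real obstacle: the argument is essentially a tautology once one notes that $\Omega$ qualifies as a common certainty component, so the ``commonly certain at every $\o$'' clause in Definition \ref{def:trade3} collapses to ``equals $\Omega$'' for this particular event.
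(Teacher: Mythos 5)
Your proof is correct, and since the paper states this lemma without proof (``It is easy to see the following''), your argument is exactly the elementary unpacking the authors had in mind. The two observations you isolate are the right ones: that $\Omega$ is itself a common certainty component (witnessed, as you note, by taking $E = \pi_i(\omega)$ in Definition~\ref{def:cbs}), which handles the forward direction by letting $S = \Omega$ when the event $A$ equals $\Omega$; and that common certainty of $A$ at $\omega$ forces $\omega \in A$, which collapses the converse direction to $A = \Omega$.
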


We introduce another notion of trade: a trade is weakly agreeable if there exists at least one state of the world at which it is commonly certain that each of the players gains positive payoff.  

\begin{definition}\label{def:trade4}
A trade $f=(f_{i})_{i \in N}$ is \emph{weakly agreeable} if there exists a state of the world $\o \in \O$ such that at $\o$ the event $\{ \o' \in \O \colon \int f_i \dd t_i (\o') > 0, \ i \in N \}$ is commonly certain.
\end{definition}

Comparing the definitions of agreeable trade and weakly agreeable trade we see that while in the case of agreeable trade at \emph{each} state of the world it is commonly certain that every player gains positive payoff, in the case of weakly agreeable trade this needs to hold only for at least one state of the world.

\begin{example}\label{ex:plbet2}
Consider the information structure from Example \ref{pl4}. There is no agreeable trade over this information structure. To see this, suppose that $(f_1,f_2)$ is an agreeable trade. Then $f_1 (\o_1) > -f_1 (\o_2)$, and $f_2 (\o_1) > -f_2 (\o_2)$, which contradicts $f_2 = -f_1$.

However, the pair $f_1 = (0,0,-1,2)$ and $f_2 = -f_1$ is a weakly agreeable trade.
\hfill \myenddefinition
\end{example}

The following notion of trade was introduced by \citet{Morris2020}. It is the least demanding of the notions in this section.

\begin{definition}\label{def:trade5}
A trade $f=(f_{i})_{i \in N}$ is \emph{acceptable} if the event $\{ \o' \in \O \colon \int f_i \dd t_i (\o') \geq 0, \ i \in N \}$ is commonly certain at every state of the world $\o \in \O$, and there exists at least one state of the world $\o^\ast \in \O$ and a player $i^\ast \in N$ such that $\int f_{i^\ast} \dd t_{i\ast} (\o^\ast) > 0$.
\end{definition}

In words, a trade is acceptable if at every state of the world it is commonly certain that none of the players loses anything by the trade, but there does exist a state of the world at which at least one player believes that her gain is positive. Therefore, we can reformulate Definition \ref{def:trade5} as follows:

\begin{definition}\label{def:trade6}
A trade $f=(f_{i})_{i \in I}$ is \emph{acceptable} if at every $\o \in \O$ the set of outcomes $\{ \o' \in \O \colon \int f_i \dd t_i (\o',\cdot) \geq 0, \ i \in N \}$ is commonly certain, and there exist $\o^\ast \in \O$ and $i^\ast \in N$ such that player $i^\ast$ believes that her gain by trade $f$ is positive.
\end{definition}

It is easy to see the following:

\begin{lemma}
Definitions \ref{def:trade5} and \ref{def:trade6} are equivalent.
\end{lemma}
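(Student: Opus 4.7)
The plan is to observe that the two definitions differ only cosmetically: the first clause is literally identical in both (the event that every player's conditional expectation of her own component of the trade is non-negative is commonly certain at every $\o \in \O$), so only the second clause requires unpacking. In Definition \ref{def:trade5} the second clause is stated directly in terms of a strict inequality on an integral, $\int f_{i^\ast} \dd t_{i^\ast}(\o^\ast) > 0$, whereas in Definition \ref{def:trade6} the same condition is phrased in epistemic language, namely that player $i^\ast$ believes at $\o^\ast$ that her gain from the trade is positive.

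The key step is therefore to justify that the epistemic phrasing and the integral phrasing describe the same object. Throughout the paper, the belief of player $i$ at state $\o$ is captured by her type $t_i(\o) \in \Delta(\O)$, and her expected gain from the component $f_i$ of a trade at that state is, by definition, the expectation of $f_i$ with respect to this type, i.e. $\int f_i \dd t_i(\o)$. Saying that "player $i$ believes her gain is positive" at $\o$ is exactly the statement that this expectation is strictly positive. This is a direct unfolding of notation rather than a substantive step.

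Combining these observations, a pair $(\o^\ast, i^\ast)$ witnesses the second clause of Definition \ref{def:trade5} if and only if it witnesses the second clause of Definition \ref{def:trade6}. Since the first clauses coincide verbatim, the two definitions describe the same family of trades, establishing the equivalence.

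I do not expect any genuine obstacle here; the lemma is a translation-of-language result, and the only care needed is to be explicit that "believes that her gain by trade $f$ is positive at $\o^\ast$" is a synonym for the integral inequality, which is how belief-over-payoffs is consistently used in the preceding definitions of the paper.
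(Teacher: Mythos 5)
Your proof is correct and matches the paper's intent exactly: the paper gives no explicit proof, introducing the lemma with ``It is easy to see the following,'' and your unpacking of the two clauses --- observing that the first clause is verbatim identical and that ``player $i^\ast$ believes her gain is positive'' is synonymous with $\int f_{i^\ast} \dd t_{i^\ast}(\o^\ast) > 0$ --- is precisely the translation the authors regard as self-evident.
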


The following example illustrates that not every acceptable trade is a weakly agreeable trade.

\begin{example}\label{ex:plbet3}
There is no weakly agreeable trade in the information structure from Example \ref{ex:pl1}. Indeed, suppose that $(f_1,f_2)$ is a weakly agreeable trade. Then $f_1 (\o_1) = f_1 (\o_4) = 0$, hence there is no common certainty component at which there is common certainty where both Anne and Bob attain positive payoff a trade.

However, let $f_1 = (0,1,1,0)$. Then the pair $(f_1,f_2)$, where $f_2 = -f_1$, is an acceptable trade, since at every state of the world both Anne and Bob believe that that each of them suffers no loss from the trade, and at the states $\o_2$ and $\o_3$ Anne believes she gains $1$. 
\hfill \myenddefinition
\end{example}

Next, by means of example we demonstrate that it can happen that even an acceptable trade fails to exist.

\begin{example}[An information structure without acceptable trade]\label{ex:plbet4}
Let the player set be $N = \{1,2\}$, the state space be $\Omega = \{\o_1,\o_2,\o_3,\o_4\}$. 
Consdier two partitions
\begin{eqnarray*}
\Pi_1=&\{\{\omega_1,\omega_2\}\{\omega_3,\omega_4\}\}, \\
\Pi_2=&\{\{\omega_1,\omega_4\}\{\omega_2,\omega_3\}\}.
\end{eqnarray*}

Let the type functions be
\begin{eqnarray*}
t_1(\cdot,\omega_1)=&t_1(\cdot,\omega_2)=&(1/2,\  1/2,\  0,\  0), \\
t_1(\cdot,\omega_3)=&t_1(\cdot,\omega_4)=&(0,\  0,\  1/2,\  1/2).\\
t_2(\cdot,\omega_1)=&t_2(\cdot,\omega_4)=&(1/2,\  0,\  0,\  1/2), \\
t_2(\cdot,\omega_2)=&t_2(\cdot,\omega_3)=&(0,\  1/2,\  1/2,\  0).
\end{eqnarray*}

Then $p = (1/4,1/4,1/4,1/4)$ is a strong common prior (and also the unique common prior). However, there is no acceptable trade. Suppose that $(f_1,f_2)$ is an acceptable trade and that $f_1 (\o_1) > 0$. Then $f_1 (\o_1) > -f_1 (\o_2)$, $f_1 (\o_3) \geq -f_1 (\o_4)$, $f_2 (\o_1) \geq -f_2 (\o_4)$, and $f_2 (\o_2) \geq -f_2 (\o_3)$. Since $f_2 = -f_1$,  
\begin{equation*}
f_1 (\o_1) > -f_1 (\o_2) \geq f_1 (\o_3) \geq -f_1 (\o_4) \geq f_1 (\o_1) ,
\end{equation*}

\noindent which is a contradiction.

The other possible cases (involving the remaining players and states) can be analysed analogously, and lead to the same conclusion.
\hfill \myenddefinition
\end{example}

Comparing all the three notions of trade introduced in this subsection we can conclude:

\begin{proposition}\label{has2}
If a trade is agreeable then it is weakly agreeable. If a trade is weakly agreeable then it is acceptable.
\end{proposition}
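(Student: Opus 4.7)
The plan is to handle each implication directly, exploiting the equivalent CCC-based formulations of the trade notions already provided in the paper.

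For the first implication, suppose $f$ is agreeable. By Definition~\ref{def:trade3}, the event $E_{>0} := \{\o' \in \O : \int f_i \dd t_i(\o') > 0 \text{ for every } i \in N\}$ is commonly certain at every $\o \in \O$. Since $\O$ is nonempty, $E_{>0}$ is in particular commonly certain at some $\o \in \O$, which is exactly the condition for weak agreeability stated in Definition~\ref{def:trade4}.

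For the second implication, suppose $f$ is weakly agreeable, witnessed by a state $\o_0 \in \O$ and a common certainty component $S$ with $\o_0 \in S \subseteq E_{>0}$. The elementary set-theoretic inclusion $E_{>0} \subseteq E_{\geq 0}$ (defining $E_{\geq 0}$ analogously with $\geq 0$ in place of $> 0$) immediately yields $S \subseteq E_{\geq 0}$. The strict witness demanded by Definition~\ref{def:trade5} is then supplied by any $\o^* \in S$ paired with any $i^* \in N$, since $\o^* \in E_{>0}$ guarantees $\int f_{i^*} \dd t_{i^*}(\o^*) > 0$. It remains to verify that $E_{\geq 0}$ is commonly certain at every $\o \in \O$, for which I would produce, around each $\o \in \O$, a common certainty component contained in $E_{\geq 0}$: for $\o \in S$ the witness CCC $S$ does the job directly, and for $\o \notin S$ a suitable CCC must be constructed from the ambient structure together with $S$.

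The main obstacle lies precisely in this last step: weak agreeability supplies only a single witness CCC $S$, whereas acceptability requires common certainty of $E_{\geq 0}$ at \emph{every} $\o \in \O$. My plan to bridge this gap is to combine the union-of-CCCs lemma stated just before Example~\ref{ex:pl1} with the trade budget $\sum_{i \in N} f_i \leq 0$, arguing that the set of states at which $E_{\geq 0}$ is commonly certain is itself a union of CCCs, hence a CCC, that must exhaust $\O$. A useful structural fact I expect to invoke is that, for a CCC $S$ and any $\o \notin S$, the quantity $t_i(\o)(S)$ takes values in $\{0,1\}$ (which follows from the constancy of $t_i$ on partition cells together with the defining property of $S$), since this dichotomy controls how beliefs outside $S$ interact with the trade values on $S$ and so governs whether $\int f_i \dd t_i(\o) \geq 0$ for $\o$ lying outside the witness CCC.
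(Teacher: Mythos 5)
Your handling of the first implication is correct and essentially complete: agreeability (Definition~\ref{def:trade3}) makes $E_{>0}$ commonly certain at every $\o$, so in particular at one $\o$, which is exactly Definition~\ref{def:trade4} given that $\O \neq \emptyset$.

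For the second implication you have correctly isolated the obstacle, but the obstacle is not one that can be overcome: the object-level claim ``every weakly agreeable trade is acceptable'' is false, so neither the union-of-CCCs argument nor the budget constraint $\sum_i f_i \leq 0$ will close the gap. To see this, return to the information structure of Example~\ref{pl4} and consider $f_1 = (1,0,-1,2)$, $f_2 = -f_1$. On the common certainty component $S=\{\o_3,\o_4\}$ one computes $\int f_1 \dd t_1(\o_3)=\tfrac12>0$ and $\int f_2 \dd t_2(\o_3)=1>0$, so $S\subseteq E_{>0}$ and the trade is weakly agreeable, witnessed at $\o_3$. But at $\o_1$ one has $\int f_2 \dd t_2(\o_1)=-\tfrac12<0$, so $\o_1\notin E_{\geq 0}$, and since every CCC through $\o_1$ must contain $\o_1$, none of them is contained in $E_{\geq 0}$. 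Hence $E_{\geq 0}$ is not commonly certain at $\o_1$ and the trade is not acceptable. The attempted deduction ``the set of states at which $E_{\geq 0}$ is commonly certain is a union of CCCs that must exhaust $\O$'' fails precisely because nothing forces $f$ to behave well off the witness component $S$.

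What the surrounding theory actually uses -- and what does hold -- is the existential form: if a weakly agreeable trade exists then an acceptable trade exists. Here the structural fact you flagged, that $t_i(\o)(S)\in\{0,1\}$ for a CCC $S$ and any $\o$, is exactly the right tool, but it is used to \emph{modify} the trade rather than to prove the object-level implication. Given a weakly agreeable $f$ with witness CCC $S\subseteq E_{>0}$, set $g_i = f_i \mathbf{1}_S$. Then $\sum_i g_i \leq 0$; for $\o\in S$ the belief $t_i(\o)$ is concentrated on $S$, so $\int g_i \dd t_i(\o)=\int f_i \dd t_i(\o)>0$; and for $\o\notin S$ the dichotomy gives either $t_i(\o)(S)=0$, whence $\int g_i \dd t_i(\o)=0$, or $t_i(\o)(S)=1$, in which case $t_i(\o)=t_i(\o')$ for some $\o'\in\pi_i(\o)\cap S\subseteq E_{>0}$ and $\int g_i \dd t_i(\o)>0$. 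Thus $E_{\geq 0}(g)=\O$, which is itself a CCC, and any $\o^*\in S$ supplies the strict witness, so $g$ is acceptable. I would recommend either recasting Proposition~\ref{has2} in this existential form, or explicitly restricting attention to trades that vanish outside the witness component.
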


Notice that the above statements cannot be reversed, that is, Example \ref{ex:plbet3} shows that not every acceptable trade is a weakly agreeable trade, and Example \ref{ex:plbet2} shows that not every weakly agreeable trade is an agreeable trade.

\subsection{Finance Considerations: Money Pump}

In this subsection we generalise the notion of money pump (see Definition \ref{def:mp}) to the multiplayer setting, and introduce three variants of our multiplayer generalisation. First we introduce the multiplayer version of semi-trade we will need later.

\begin{definition}\label{def:bet}
A family of functions $(f_i)_{i \in N}$ is a \emph{semi-trade} if at every state of the world $\o \in \O$ the set of outcomes $\{ \o' \in \O \colon \int f_i \dd t_i (\o') \geq 0, \ i \in N \}$ is commonly certain.
\end{definition}

In words, a family of functions is a semi-trade if at every state of the world it is commonly certain that none of them will lose from the semi-trade. Note that this is the weakest among the notions of trade we consider in this paper. If a trade is agreeable then it is weakly agreeable. Moreover, if a trade is weakly agreeable then it is acceptable; and if a trade is acceptable then it is a semi-trade. On the other hand, it is worth noticing that in the case of semi-trade there is no restriction on the sum of the functions.

Our first version of multiplayer money pump is as follows:

\begin{definition}\label{money pump2}
A probability distribution $p \in \Delta (\Omega)$ is a \emph{multiplayer money pump} if there exists a semi-trade $(f_i)_{I \in N}$ such that

\begin{equation*}
\int \sum \limits_{i \in N} f_i \dd p  < 0 .
\end{equation*}
\end{definition}

In words, a multiplayer money pump is a probability distribution if for each player there exists a payoff function such that at every state of the world every player's expected payoff is non-negative, but the expected payoff of the sum of the players' playoff functions by the probability distribution is negative.  

Two remarks apply here. First, the analogy between money pump (Definition \ref{money pump}) and multiplayer money pump (Definition \ref{money pump2}) is clear. If $N$ is a singleton -- that is, where there is only one decision maker -- the two definitions are the same.

Second, a probability distribution $p$ can be regarded as the type of an uninformed player. Since this player is uniformed, her belief is identical at each state of the world. The unique belief is precisely $p$. In light of this interpretation, the notion of multiplayer money pump says that no informed player loses by accepting the semi-trade, but an uninformed player \emph{will} strictly suffer a loss.

\begin{example}\label{eg:mppl1}
Consider the information structure in Example \ref{ex:pl2}. Here every probability distribution is a multiplayer money pump. For example, $p = (1/4,1/4,1/4,1/4)$ is a multiplayer money pump by the semi-trade $(f_1, f_2)$, where $f_1 = (3/2,-3/2,7/2,-7/2)$ and $f_2 = -f_1$.
\hfill \myenddefinition
\end{example}

In the following definition we introduce a property describing a situation in which a probability distribution ascribes positive weight at each common certainty component. 

\begin{definition}\label{def:maximality}
A probability distribution $p \in \Delta (\O)$ is \emph{maximal} if $p (S) > 0$ at each common certainty component $S \subseteq \O$.
\end{definition}

We introduce the concept of a universal multiplayer money pump, which in a sense is a maximal money pump. Formally,

\begin{definition}\label{money pump3}
A probability distribution $p \in \Delta (\Omega)$ is a \emph{universal multiplayer money pump}, if
\begin{itemize}
\item It is maximal,

\item It is a multiplayer money pump.
\end{itemize}
\end{definition}

In words, a universal multiplayer money pump is a money pump  that  assigns positive probability to each common certainty component. Therefore, if there exists a common certainty component with a multiplayer money pump, then a universal multiplayer money pump exists. 
It is clear that every universal multiplayer money pump is a money pump, but not vice versa.

\begin{example}\label{pl-1}
Consider the information structure in Example \ref{pl}4. It is easy to see that $p = (0,0,1,0)$ is a multiplayer money pump, but it is not maximal, hence it is not a universal multiplayer money pump. However, it is also easy to see that every maximal probability distribution is a universal money pump.
\hfill \myenddefinition
\end{example}

In the following definition we further strengthen maximality (see Definition \ref{def:maximality}). We require that what we call a strongly maximal probability distribution assign positive probability to each type of every player. Formally,

\begin{definition}\label{def:strongmaximality}
A probability distribution $p \in \Delta (\O)$ is \emph{strongly maximal} if  $p (\pi_i) > 0$ for each player $i \in N$ and  each $\pi_i \in \Pi_i$.
\end{definition}

The next definition is our last variant of money pump.

\begin{definition}\label{money pump4}
A probability distribution $p \in \Delta (\Omega)$ is a \textit{strong multiplayer money pump} if 

\begin{itemize}
\item It is strongly maximal,

\item It is a multiplayer money pump.
\end{itemize}
\end{definition}

In words, a probability distribution is a strong multiplayer money pump if it is a money pump and places positive weight
on each player's type. 
It is clear that every strong multiplayer money pump is a universal money pump, but not vice versa.

\begin{example}\label{pl-2}
Consider the information structure in Example \ref{ex:pl3}. Here every probability distribution is a strong money pump, but e.g. $p = (\frac{1}{2} \delta_{\o_1} + \frac{1}{2} \delta_{\o_2})$ is not a universal multiplayer money pump. 
\hfill \myenddefinition
\end{example}

Finally, we relate the three introduced notions of money pump to each other. 

\begin{proposition}\label{has3}
If $p$ is a strong multiplayer money pump, then it is a universal multiplayer money pump. If $p$ is a universal multiplayer money pump, then it is a multiplayer money pump. 
\end{proposition}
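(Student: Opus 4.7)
The proposition packages two inclusions among three definitions, and both follow by unpacking the definitions together with one elementary structural observation about common certainty components. My plan is to address the two implications separately, starting with the easier one.

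For the second implication, that every universal multiplayer money pump is a multiplayer money pump, nothing needs to be shown: Definition \ref{money pump3} already \emph{defines} a universal multiplayer money pump as a multiplayer money pump satisfying the additional maximality property (Definition \ref{def:maximality}), so the implication is immediate.

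For the first implication, that every strong multiplayer money pump is a universal multiplayer money pump, Definitions \ref{money pump4} and \ref{money pump3} agree on requiring the ``multiplayer money pump'' property itself, so that clause transfers for free. The task thereby reduces to proving that \emph{strong maximality implies maximality}, i.e., that $p(\pi_i) > 0$ for every player $i$ and every $\pi_i \in \Pi_i$ forces $p(S) > 0$ for every common certainty component $S$. My plan is to fix an arbitrary CCC $S$ and an arbitrary $\omega_0 \in S$, pick any player $i$, and argue that $\pi_i(\omega_0) \subseteq S$; strong maximality will then immediately give $p(S) \geq p(\pi_i(\omega_0)) > 0$. To obtain the inclusion, I combine $t_i(\omega_0)(\pi_i(\omega_0)) = 1$ from Definition \ref{def1} with the existence (Definition \ref{def:cbs}) of some $E \subseteq S$ satisfying $t_i(\omega_0)(E) = 1$: together these force the type $t_i(\omega_0)$ to be concentrated on $\pi_i(\omega_0) \cap S$. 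Since $t_i$ is constant across the partition cell $\pi_i(\omega_0)$, every state in $\pi_i(\omega_0)$ carries the same type, supported in $S$, yielding $\pi_i(\omega_0) \subseteq S$.

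The only delicate moment is the last inference: passing from ``every $\omega' \in \pi_i(\omega_0)$ has type concentrated in $S$'' to ``$\omega' \in S$ itself''. This is the single nontrivial step of the proof and is transparent under the standard epistemic convention that each state lies in the support of its own type (a convention consistent with every example in the paper); I would make this step explicit and verify it against the paper's implicit usage of information structures, as everything else reduces to rereading the definitions.
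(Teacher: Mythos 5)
Your second implication is fine and trivially true, exactly as you say: Definition~\ref{money pump3} \emph{defines} a universal multiplayer money pump as a multiplayer money pump with an extra property, so there is nothing to prove.

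The first implication, however, has a genuine gap, and it is precisely the step you flagged. The ``standard epistemic convention that each state lies in the support of its own type'' is \emph{not} among the conditions in Definition~\ref{def1}: the only requirements on a type function are $t_i(\omega)(\pi_i(\omega))=1$ and constancy of $t_i$ over each cell. Nothing forces $t_i(\omega)(\{\omega\})>0$. Worse, the paper's own examples violate the convention. In the information structure of Example~\ref{ex:pl1}, Ben's type on the cell $\{\omega_1,\omega_2\}$ is $\delta_{\omega_1}$, so at $\omega_2$ Ben's type puts zero weight on $\omega_2$ itself. Consequently your claimed inclusion $\pi_i(\omega_0)\subseteq S$ fails there: $S=\{\omega_1\}$ is a common certainty component, $\omega_0=\omega_1$, but $\pi_{\text{Ben}}(\omega_1)=\{\omega_1,\omega_2\}\not\subseteq S$. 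The same phenomenon occurs in Examples~\ref{ex:pl2} and~\ref{pl4}.

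Moreover, once the convention is dropped, the underlying claim ``strongly maximal $\Rightarrow$ maximal'' itself fails, so the gap cannot be patched by a different route. Take $\Omega=\{\omega_1,\omega_2\}$, a single player with the trivial partition $\{\Omega\}$ and constant type $t(\omega_1)=t(\omega_2)=\delta_{\omega_1}$. Then $\{\omega_1\}$ is a common certainty component (since $t(\omega_1)(\{\omega_1\})=1$), yet the distribution $p=\delta_{\omega_2}$ is strongly maximal ($p(\Omega)=1>0$ and that is the only cell), is a multiplayer money pump (take $f=(1,-1)$: $\int f\dd t(\omega)=1\ge 0$ for both $\omega$, while $\int f\dd p=-1<0$), but is not maximal since $p(\{\omega_1\})=0$. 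So $p$ is a strong multiplayer money pump that is not a universal multiplayer money pump. What does go through is the disintegrability-based argument for \emph{priors} (Proposition~\ref{has1}): if $p$ is a common prior then $p(S)=\sum_{\pi_i}t_i(\omega_{\pi_i})(S)\,p(\pi_i)\ge t_i(\omega_0)(S)\,p(\pi_i(\omega_0))=p(\pi_i(\omega_0))>0$ for any $\omega_0\in S$. A money pump is by construction \emph{not} a common prior, so that argument is unavailable here, and your convention would be the missing ingredient — but it is an additional hypothesis that the paper does not impose.
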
 

Notice that the statements above cannot be reversed, that is, not every multiplayer money pump is a universal multiplayer money pump (see Example \ref{pl-1}), and not every universal multiplayer money pump is a strong multiplayer money pump (see Example \ref{pl-2}).

\subsection{No Trade Theorems}

In this subsection we relate the various notions of common prior (decision theory considerations) and trade (economic considerations) to each other, via three no trade theorems. 

The first result is by \citet{Morris1991,Morris1994,Feinberg1995,Feinberg1996,Samet1998,Feinberg2000} (for  details see \citet{Morris2020}). It says that either there exists a common prior or an agreeable trade. For the proof of this theorem see the corollary on p. 174 in \citet{Samet1998}.

\begin{theorem}\label{thm:notrade1}
Let $T$ be an information structure. Then only one of the following two statements can obtain:
\begin{itemize}
\item $T$ attains a common prior.

\item There exists an agreeable trade over $T$.
\end{itemize}
\end{theorem}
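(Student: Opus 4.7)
The plan is to prove the theorem as a dichotomy by establishing the two directions in turn. The forward direction is a short calculation using Lemma \ref{lem} and disintegrability; the reverse direction will rest on the Strong Separating Hyperplane Theorem together with a constant-shift adjustment that I describe below.

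For the forward direction, I will suppose $p \in \bigcap_i P_i$ is a common prior and $(f_i)_{i \in N}$ is a trade, i.e.\ $\sum_i f_i \leq 0$ pointwise. For each player $i$, $(\Pi_i,t_i)$-disintegrability of $p$ gives
\begin{equation*}
\int f_i \dd p \;=\; \sum_{\pi \in \Pi_i} p(\pi)\int f_i \dd t_i(\omega_\pi),
\end{equation*}
for any choice of representatives $\omega_\pi \in \pi$. Agreeableness makes each posterior expectation on the right strictly positive, and since $\sum_{\pi} p(\pi)=1$ forces $p(\pi)>0$ for at least one $\pi$, it follows that $\int f_i \dd p > 0$ for every $i$; summing over $i$ contradicts $\int (\sum_i f_i) \dd p \leq 0$.

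For the reverse direction, suppose no common prior exists. By Lemma \ref{lem}, each $P_i = \conv\{t_i(\omega) : \omega \in \Omega\}$ is compact and convex, so the product $\prod_{i \in N} P_i \subseteq (\mathbb{R}^\Omega)^n$ is compact, convex, and disjoint from the diagonal subspace $\Delta = \{(q,\dots,q) : q \in \mathbb{R}^\Omega\}$. The Strong Separating Hyperplane Theorem then yields $(f_1,\ldots,f_n) \in (\mathbb{R}^\Omega)^n$ and $c \in \mathbb{R}$ with
\begin{equation*}
\sum_{i \in N} \int f_i \dd p_i \;>\; c \;\geq\; \sum_{i \in N} \int f_i \dd q
\end{equation*}
for all $(p_i) \in \prod_i P_i$ and $q \in \mathbb{R}^\Omega$. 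Since the right-hand estimate must hold for every $q$, the sum $\sum_i f_i$ must vanish and $c \geq 0$; specialising the left-hand estimate to $p_i = t_i(\omega_i)$ yields $\sum_i \int f_i \dd t_i(\omega_i) > 0$ for every tuple $(\omega_i) \in \Omega^n$.

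I expect the main obstacle to be the gap between this \emph{sum-wise} positivity and the \emph{coordinate-wise} strict positivity that an agreeable trade demands. To close the gap I will apply a coordinate-wise constant shift: setting $m_i := \min_{\omega \in \Omega} \int f_i \dd t_i(\omega)$, the bound above yields $\sum_i m_i > 0$ by separable minimisation, so I may choose $c_i := -m_i + \tfrac{1}{n}\sum_j m_j$. The adjusted family $(f_i + c_i)_{i \in N}$ satisfies $\sum_i (f_i + c_i) = \sum_i c_i = 0 \leq 0$, and for every $i$ and every $\omega$,
\begin{equation*}
\int (f_i + c_i) \dd t_i(\omega) \;\geq\; m_i + c_i \;=\; \tfrac{1}{n}\sum_j m_j \;>\; 0,
\end{equation*}
so $(f_i + c_i)_{i \in N}$ is the sought agreeable trade. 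The conceptual key is to set up the separation in the product space against the diagonal, and the technical key is the subsequent constant-shift adjustment.
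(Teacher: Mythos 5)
Your proof is correct, and it is essentially the Samet (1998) argument that the paper itself defers to for this result (the paper explicitly cites ``the corollary on p.~174 in \citet{Samet1998}'' rather than reproving it): one separates the compact convex product $\prod_{i} P_i$ from the diagonal subspace, reads off $\sum_i f_i = 0$ and sum-wise strict positivity over all type profiles, and then redistributes constants across the players to upgrade this to coordinate-wise strict positivity. Both directions in your write-up are sound, including the small but essential bookkeeping that $\sum_i c_i = 0$ preserves the trade constraint while $m_i + c_i = \frac{1}{n}\sum_j m_j > 0$ delivers agreeableness.
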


The following no trade theorem states that an information structure admits a universal common prior if and only if there does not exist a weakly agreeable trade. 

\begin{theorem}\label{thm:notrade2}
Let $T$ be an information structure. Then only one of the following two statements can obtain:
\begin{itemize}
\item $T$ admits a universal common prior.

\item There exists a weakly agreeable trade over $T$.
\end{itemize}
\end{theorem}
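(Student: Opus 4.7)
The plan is to reduce Theorem \ref{thm:notrade2} to Theorem \ref{thm:notrade1} by exploiting Proposition \ref{prop:prop2}, which characterises a universal common prior on $T$ in terms of common priors on the induced structures $T_S$ over each common certainty component $S$. The mutual exclusivity part will follow from the analogous exclusivity in Theorem \ref{thm:notrade1}; the dichotomy part will follow by constructing a common-certainty-component-localised agreeable trade in $T_S$ from a weakly agreeable trade in $T$, and vice versa.

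First I would argue $(\Rightarrow\!\!\Leftarrow)$ incompatibility. Suppose $T$ admits a universal common prior $p$, so by Proposition \ref{prop:prop2} every induced structure $T_S$ attains a common prior. If $(f_i)_{i\in N}$ were a weakly agreeable trade witnessed at some $\omega\in\Omega$, then by definition there is a common certainty component $S$ with $\omega\in S\subseteq E:=\{\omega'\in\Omega : \int f_i \dd t_i(\omega')>0 \text{ for all } i\in N\}$. Define $g_i := f_i|_S$. Since $t_i(\omega')(S)=1$ for every $\omega'\in S$, we have $\int g_i \dd t_i^S(\omega')=\int f_i \dd t_i(\omega')>0$ for every $\omega'\in S$ (because $S\subseteq E$), while $\sum_{i\in N} g_i(\omega')=\sum_{i\in N} f_i(\omega')\leq 0$ on $S$. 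Hence $(g_i)_{i\in N}$ is an agreeable trade on $T_S$, contradicting Theorem \ref{thm:notrade1}.

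For the converse direction I would show that if $T$ does not admit a universal common prior, then a weakly agreeable trade exists. By Proposition \ref{prop:prop2}, some common certainty component $S$ is such that $T_S$ does not attain a common prior; by Theorem \ref{thm:notrade1} applied to $T_S$, there exists an agreeable trade $(g_i)_{i\in N}$ on $T_S$. Extend by zero: set $f_i(\omega'):=g_i(\omega')$ for $\omega'\in S$ and $f_i(\omega'):=0$ for $\omega'\notin S$. Then $\sum_{i\in N} f_i\leq 0$ everywhere. Pick any $\omega\in S$; for every $\omega'\in S$ and every $i\in N$, again using $t_i(\omega')(S)=1$, one gets $\int f_i \dd t_i(\omega')=\int g_i \dd t_i^S(\omega')>0$, so $S\subseteq\{\omega' : \int f_i \dd t_i(\omega')>0,\ i\in N\}$. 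Thus the latter event is commonly certain at $\omega$, and $(f_i)_{i\in N}$ is a weakly agreeable trade on $T$.

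The main obstacle I anticipate is the bookkeeping needed to translate between trades on $T$ and on $T_S$: one must use the common-certainty property $t_i(\omega')(S)=1$ for $\omega'\in S$ both to ensure that restriction preserves expectations (so that agreeability transfers in both directions) and to guarantee that the zero-extension of an agreeable trade on $T_S$ remains a trade on $T$ whose witnessing event is commonly certain. Everything else is a direct combination of the earlier Proposition \ref{prop:prop2} with Samet's no-trade theorem (Theorem \ref{thm:notrade1}).
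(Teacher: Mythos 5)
Your proposal is correct and takes essentially the same route as the paper's own proof: both directions reduce to Theorem \ref{thm:notrade1} via Proposition \ref{prop:prop2}, with restriction to a common certainty component $S$ (using $t_i(\omega')(S)=1$ for $\omega'\in S$) in one direction and extension by zero in the other. Your write-up just makes the bookkeeping more explicit than the paper does.
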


\begin{proof}
Suppose that $T$ admits a universal common prior $p$. Then by Proposition \ref{prop:prop2} the restriction of $p$ onto $S$ is a common prior for $T_S$. Applying Theorem \ref{thm:notrade1} to $T_S$ yields that there is no agreeable trade over $T_S$. Since this holds for every common certainty component, there exists no agreeable trade over any common certainty component, meaning that there is no weakly agreeable trade.

\bigskip

If $T$ does not admit a universal common prior, then there exists a common certainty component $S$ such that $T_S$ does not attain a common prior. Then by Theorem \ref{thm:notrade1} there exists an agreeable trade $(f_i^S)_{i \in I}$ over the information structure $T_S$. For every player $i \in N$ let

\begin{equation*}
f_i (\o) = 
\begin{cases}
f_i^S (\o) & \text{if } \o \in S, \\
0 & \text{otherwise}.
\end{cases}
\end{equation*}

Then it is easy to see that $(f_i)_{i \in N}$ is a weekly agreeable trade.
\end{proof}

The following no trade theorem appears in \citet{Morris2020}. It states that an information structure admits a strong common prior if and only if there exists no acceptable trade.

\begin{theorem}\label{thm:notrade3}
Let $T$ be an information structure. Then only one of the following two statements can obtain:

\begin{itemize}
\item $T$ admits a strong common prior.

\item There exists an acceptable trade over $T$.
\end{itemize}
\end{theorem}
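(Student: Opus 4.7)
I would prove the two implications separately, by contrapositive, beginning with a simplification: since $\O$ is itself a common certainty component (take $E = \O$ in Definition \ref{def:cbs}) and the union of common certainty components is a common certainty component, the requirement in Definition \ref{def:trade5} that $\{\omega' : \int f_i \dd t_i (\omega') \geq 0,\ \forall i\}$ be commonly certain at every $\omega \in \O$ is equivalent to the pointwise statement $\int f_i \dd t_i (\omega) \geq 0$ for every $\omega \in \O$ and every $i \in N$. With this simplification, the easy direction is immediate: given a strong common prior $p$ and an acceptable trade $(f_i)$, disintegrability gives $\int f_i \dd p = \sum_{\pi_i \in \Pi_i} p (\pi_i) \int f_i \dd t_i (\omega_{\pi_i}) \geq 0$ for each $i$ (where $\omega_{\pi_i}$ is any state in $\pi_i$), while $\sum_i f_i \leq 0$ forces $\int \sum_i f_i \dd p \leq 0$. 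Together these impose $\int f_i \dd p = 0$, and the strict positivity of every $p (\pi_i)$ then forces $\int f_i \dd t_i (\omega_{\pi_i}) = 0$ for every $(i, \pi_i)$, contradicting the strict-positivity clause in the definition of acceptable trade.

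\textbf{Hard direction via LP duality.} For the converse, suppose no acceptable trade exists. I would encode this as the linear program
\begin{equation*}
\max \sum_{i \in N,\ \pi_i \in \Pi_i} \int f_i \dd t_i (\omega_{\pi_i}) \quad \text{s.t.} \quad \sum_i f_i (\omega) \leq 0 \ \forall \omega,\quad \int f_i \dd t_i (\omega_{\pi_i}) \geq 0 \ \forall i, \pi_i .
\end{equation*}
Each summand of the objective is non-negative by the constraints, and the feasible set is a cone, so the optimum equals $0$ precisely when no acceptable trade exists. Introducing non-negative dual multipliers $\mu_\omega \geq 0$ (for $\sum_i f_i(\omega) \leq 0$) and $\lambda_{i, \pi_i} \geq 0$ (for $\int f_i \dd t_i(\omega_{\pi_i}) \geq 0$) and using that $t_i (\omega)$ is supported on $\pi_i (\omega)$, dual feasibility collapses to the single identity
\begin{equation*}
\mu_\omega = (\lambda_{i, \pi_i (\omega)} + 1) \, t_i (\omega) (\omega), \qquad i \in N,\ \omega \in \O.
\end{equation*}

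\textbf{Extracting a strong common prior.} Summing this dual identity over $\omega \in \pi_i$ for fixed $(i, \pi_i)$ and using $t_i (\omega_{\pi_i})(\pi_i) = 1$ yields $\mu (\pi_i) = \lambda_{i, \pi_i} + 1 \geq 1$. Substituting back, the identity becomes the disintegrability equation $\mu (\omega) = \mu (\pi_i (\omega)) \, t_i (\omega) (\omega)$, holding for every player $i$ simultaneously. Hence $\mu$ is a non-negative measure that disintegrates against every player's type function, with $\mu (\pi_i) \geq 1$ for every type. Normalizing $p := \mu / \mu (\O)$ produces a common prior satisfying $p (\pi_i) \geq 1/\mu (\O) > 0$ for every $(i, \pi_i)$, i.e., a strong common prior.

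\textbf{Main obstacle.} The principal difficulty is the hard direction: coaxing one measure into serving every player at once while simultaneously guaranteeing strict positivity on every type. The LP dual packages both demands in a single identity, but careful bookkeeping is required to verify (i) that the dual equation forces disintegrability for every player simultaneously and (ii) that non-negativity of the $\lambda$-multipliers upgrades to $\mu (\pi_i) \geq 1$ rather than merely $\mu (\pi_i) \geq 0$. A parallel route is a direct Stiemke-type separation of the image of the trade cone under $(f_i) \mapsto (\int f_i \dd t_i (\omega_{\pi_i}))_{i, \pi_i}$ from the strictly positive orthant, which delivers a strictly positive separator that the same coefficient-tracking converts into the strong common prior.
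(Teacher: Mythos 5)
Your proof is correct, and the hard direction takes a genuinely different route from the paper's. The paper deliberately routes through an auxiliary \emph{proper}-separation theorem (Theorem~\ref{tetel10}): it forms the product set $\times_i \interior P_i$ and the diagonal $\{(p,\ldots,p)\}$, observes these are disjoint precisely when no strong common prior exists, and then massages a properly separating functional into a trade $(f_i)$ with the required one-sided strict inequality -- the whole point being to contrast this with the \emph{strong}-separation argument that underlies Theorem~\ref{thm:notrade1}. You instead encode ``no acceptable trade'' as boundedness of an explicit linear program and read the strong common prior off the LP dual: the key observation that the dual-feasibility identity $\mu_\omega = (1+\lambda_{i,\pi_i(\omega)})\,t_i(\omega)(\omega)$, summed over $\omega\in\pi_i$, forces $\mu(\pi_i) = 1 + \lambda_{i,\pi_i} \geq 1$ is exactly where the ``$+1$'' from the objective coefficients upgrades mere non-negativity into strict positivity on every type. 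This makes the positivity of $p(\pi_i)$ structurally transparent in a way the paper's geometric argument does not, and it is more self-contained (standard LP duality vs.\ an ad-hoc separation lemma that the paper has to prove). The trade-off is that the paper's formulation makes the strong-vs.-proper separation contrast, which is a conceptual theme the authors want to emphasize, visible at the level of the key lemma, whereas in your proof that distinction is hidden inside the duality machinery. Your preliminary simplification -- that ``commonly certain at every $\omega$'' of the non-loss event is equivalent to the pointwise condition $\int f_i \dd t_i(\omega) \geq 0$ for all $i,\omega$ -- is a correct and useful observation that the paper uses implicitly but does not spell out; and the easy direction matches the paper's (convexity of $P_i$ plus $p(\pi_i)>0$ forcing every type-wise expectation to vanish). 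One small point of care worth recording explicitly when writing this up: strong LP duality applies here because the primal is feasible ($f\equiv 0$) with finite optimum $0$ when no acceptable trade exists, and the dual objective is identically $0$ (all right-hand sides vanish), so dual feasibility is the entire content.
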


We give a new proof for this theorem. We do so to emphasise the similarities and the differences between this theorem and Theorem \ref{thm:notrade1}. While the proof of Theorem \ref{thm:notrade1} is accomplished by strong separation of convex sets, the proof here depends on proper separation of convex sets. First we need an auxiliary result:

\begin{theorem}\label{tetel10}
Let $K_1,\dots, K_n \subseteq \Delta^m$ be convex closed sets such that $\textup{int} K_i \neq \emptyset$,%
\footnote{\ $\textup{int} S$, where $S \subseteq \R^d$, denotes the relative interior of set $S$, that is, the interior of $S$ in the vector space spanned by $S$.}
for all $i = 1,\ldots,n$, where $\Delta^m$ is the unit simplex in $\R^m$. Then $\bigcap^n_{i=1} \textup{int} K_i= \emptyset$ if and only if there exist $f_1,\dots, f_n \in R^m$ such that $\sum_{i=1}^n f_i\leq 0$ and $\langle x_i, f_i \rangle \geq 0$ for all $x_i \in K_i$ and $i = 1,\ldots,n$; and there exist $i^*$ and $x_{i^*}^* \in K_{i^*}$ such that $\langle x_{i^*}^*, f_{i^*} \rangle > 0$.
\end{theorem}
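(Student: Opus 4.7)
The plan is to apply the proper separation theorem to the convex sets $C = K_1 \times \cdots \times K_n$ and $D = \{(x, \ldots, x) : x \in \Delta^m\}$ inside $(\R^m)^n$. The key observation is that $\textup{int}(C) = \prod_i \textup{int}(K_i)$ while $\textup{int}(D) = D$, so the hypothesis $\bigcap_i \textup{int}(K_i) = \emptyset$ matches exactly the requirement $\textup{int}(C) \cap \textup{int}(D) = \emptyset$ that enables proper separation.

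For the reverse implication I would argue by contradiction. Assume such $(f_i)$ exist and pick $x \in \bigcap_i \textup{int}(K_i)$. Since $x$ lies in the relative interior of $K_{i^*}$, one can write $x$ as a strict convex combination $x = \mu x_{i^*}^* + (1-\mu)z$ with $z \in K_{i^*}$ and $\mu \in (0,1)$, yielding $\langle x, f_{i^*} \rangle = \mu \langle x_{i^*}^*, f_{i^*} \rangle + (1-\mu)\langle z, f_{i^*} \rangle > 0$. Combining with $\langle x, f_i \rangle \geq 0$ for every $i$ gives $\langle x, \sum_i f_i \rangle > 0$, contradicting $\sum_i f_i \leq 0$ together with $x \geq 0$.

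For the forward direction I would use proper separation to produce $(g_1, \ldots, g_n) \neq 0$ and $\gamma \in \R$ with $\sum_i \langle g_i, x_i \rangle \geq \gamma$ on $C$, $\langle x, \sum_i g_i \rangle \leq \gamma$ on $\Delta^m$, and with not every point of $C \cup D$ lying on $\{\ell = \gamma\}$. Writing $\alpha_i = \min_{x_i \in K_i} \langle g_i, x_i \rangle$, the candidate is $f_i = g_i - c_i \mathbf{1}$ where scalars $c_i \leq \alpha_i$ satisfy $\sum_i c_i \geq \max_j (\sum_i g_i)_j$. Using $\langle x_i, \mathbf{1} \rangle = 1$, this choice immediately delivers $\langle x_i, f_i \rangle \geq 0$ on each $K_i$ and $\sum_i f_i \leq 0$.

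The hard part will be squeezing the required strict inequality out of proper (rather than strong) separation. When some $g_{i^*}$ is non-constant on $K_{i^*}$, one may set $c_i = \alpha_i$ and let $x_{i^*}^*$ maximise $\langle g_{i^*}, \cdot \rangle$ on $K_{i^*}$; then $\langle x_{i^*}^*, f_{i^*} \rangle > 0$ follows directly. Otherwise every $g_i$ is constant on $K_i$ and $\ell$ is constant on $C$, so the non-degeneracy clause of proper separation pushes a point of $D$ off the hyperplane; I would argue this creates positive slack $\sum_i \alpha_i > \max_j (\sum_i g_i)_j$, which lets one pick $c_{i^*} < \alpha_{i^*}$ for some $i^*$ and obtain $\langle x_{i^*}^*, f_{i^*} \rangle \geq \alpha_{i^*} - c_{i^*} > 0$ for any $x_{i^*}^* \in K_{i^*}$. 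Any residual borderline situation in which the slack still vanishes is resolved by re-selecting the separating hyperplane: when $\bigcap_i K_i = \emptyset$, compactness of $C$ and closedness of $D$ permit strong separation, which automatically yields slack; otherwise one picks a supporting hyperplane of $K_{i^*}$ at a common point $x^0 \in \bigcap_i K_i$ that lies on the relative boundary of some $K_{i^*}$, guaranteed to exist because $x^0 \notin \bigcap_i \textup{int}(K_i)$.
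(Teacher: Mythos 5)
Your reverse direction is correct, and is in fact spelled out more carefully than the paper's, which essentially leaves it implicit. Your forward direction follows the same route as the paper: properly separate $\prod_i K_i$ from the diagonal $\{(p,\dots,p)\colon p\in\Delta^m\}$, then translate each $g_i$ by a multiple of the all-ones vector $e$ so as to turn the separating functional into the required profile $(f_i)$. Your Case 1 (some $g_{i^*}$ non-constant on $K_{i^*}$) and your positive-slack subcase are handled correctly, and the arithmetic $\sum_i\alpha_i\geq\gamma\geq\max_j\bigl(\sum_i g_i\bigr)_j$ that makes the choice of $c_i$ feasible is exactly right.

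The genuine gap is in the residual subcase you flag yourself: every $g_i$ constant on $K_i$, $\sum_i\alpha_i=\gamma=\max_j\bigl(\sum_i g_i\bigr)_j$, and $\bigcap_i K_i\neq\emptyset$. Your proposed fix -- pick $x^0\in\bigcap_i K_i$ lying on the relative boundary of some $K_{i^*}$ and take a supporting functional $h$ of $K_{i^*}$ at $x^0$ -- only constrains coordinate $i^*$. Setting $f_{i^*}=h-\langle h,x^0\rangle e$ does give $\langle x_{i^*},f_{i^*}\rangle\geq 0$ on $K_{i^*}$ with strictness somewhere, but you still owe $f_i$ for $i\neq i^*$ satisfying $\langle x_i,f_i\rangle\geq 0$ on each $K_i$ and $\sum_i f_i\leq 0$, and a supporting functional for $K_{i^*}$ carries no information about its sign on the other $K_i$. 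Concretely, take $m=3$, $n=2$, $K_1=\conv\{(1,0,0),(0,1,0)\}$ and $K_2=\conv\{(0,1,0),(0,0,1),(\tfrac12,0,\tfrac12)\}$; then $\bigcap_i\textup{int}\,K_i=\emptyset$, $\bigcap_i K_i=\{(0,1,0)\}$, and a separating $g$ with both $g_i$ constant on $K_i$ (e.g.\ $g_1=(0,0,-1)$, $g_2=0$, $\gamma=0$) lands you in this residual subcase. Picking $i^*=1$ and the supporting functional $h=(1,0,0)$ of $K_1$ at $(0,1,0)$ forces $f_2=-h$, which violates $\langle x_2,f_2\rangle\geq 0$ on part of $K_2$; the choice $i^*=2$ with $h=(0,0,1)$ does work, but nothing in your argument prescribes that choice or proves that a valid $(i^*,h)$ always exists. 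This subcase is exactly where proper (as opposed to strong) separation is delicate, and until it is settled the forward implication is incomplete. For what it is worth, the paper's own treatment of this step -- the claim that the non-degeneracy witness can be moved into $X=\prod_i\textup{int}\,K_i$ by taking convex combinations with a point of $Y$ -- is also not watertight, so a clean resolution here would be a real improvement.
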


\begin{proof}
Let $X=\times^n_{i=1} \textup{int}K_i $ and $Y=\{(p,\dots, p)\in R^{m n}\colon p\in \Delta^m\}$. Notice that both $X$ and $Y$ are convex sets. Then  $\bigcap^n_{i=1} \textup{int} K_i= \emptyset$ if and only if $X \cap Y = \emptyset$, that is, if and only if $X$ and $Y$ can be properly separated. 

Proper separation means that there exist $c \in \R$ and $g=(g_1, \dots, g_n) \in R^{m n}$, where $g_i \in R^m$, $i = 1,\ldots,n$, such that for all $x=(x_1, \dots, x_n) \in X$ and $y=(p, \dots, p) \in Y$ it holds that $\langle x, g \rangle \geq c \geq \langle y, g \rangle$, moreover, there exists  $z \in X \cup Y$ such that $\langle z, g \rangle \ne 0$.

Let $g'_i = g_i - c e$, where $e\in R^m$ such that all its components are $1$. Then $\sum_{i=1}^n \langle x_i, g'_i \rangle \geq 0$ for all $x \in X$, and there exists $z \in X \cup Y$ such that $\langle z_i, g'_i \rangle \neq 0$. Next we show that $z$ can be from set $X$.

Assume that there exists $z \in Y$ such that $\langle z,g' \rangle \ne 0$. Then for every $0 < \alpha < 1$ and $x \in X$
 
\begin{equation}\label{eq-1}
0 \ne g'(\alpha x+(1-\alpha) z) = \alpha xg'+(1-\alpha) zg'.
\end{equation}

Let $x \in X$ be such that $\langle x, g'\rangle = 0$ and $z \in Y$. 
Since $X$ is an open set there exists $\alpha \in (0,1)$ such that $\alpha x+(1-\alpha^*) z \in X$, and by \eqref{eq-1} $g'(\alpha x+(1-\alpha^*) z) \neq 0$.

Furthermore, $\sum_{i=1}^n \langle p , g'_i \rangle \leq 0$ for all $p \in \Delta^m$, hence $\sum_{i=1}^n g'_i \leq 0$. 

Since $\sum_{i=1}^n \langle x_i, g'_i \rangle \geq 0$ for all $x \in X$, and there exist $i^*$ and $x^*_{i^*} \in X_{i^*}$ such that $\langle x^*_{i^*},  g'_{i^*} \rangle > 0$. Moreover, there exist $c_i \in \R$, $i=1,\ldots,n$, such that  $\sum_{i=1}^n c_i = 0$, and $\langle x_i, g'_i \rangle + c_i \geq 0$ for all $x \in X$ and $i=1,\dots,n$, moreover, $\langle x^*_{i^*}, g'_{i^*} \rangle  + c_{i^*}  > 0$. 

Let $f_i$ be defined as follows: $f_i := g'_i + c_i e$. Then $\sum_{i=1}^n f_i  \leq 0$, because  $\sum_{i=1}^n g'_i \leq 0$ and $\sum_{i=1}^n c_i = 0$. Furthermore, for every $x_i\in K_i$ it holds that $\langle x_i, f_i \rangle \geq \langle \overline{x}_i , f_i \rangle = \langle \overline{x}_i, g'_i \rangle  + c_i \langle \overline{x}_i, e \rangle \geq 0$, because $\langle \overline{x}_i, e \rangle =1$, where $\overline{x}_i \in \argmin_{x_i' \in X_i} \langle x_i', f_i\rangle$, and there exist $i^*$ and $x^*_{i^*} \in X_{i^*}$ such that  $\langle x^*_{i^*}, f_{i^*} \rangle = \langle x^*_{i^*}, g'_{i^*} \rangle  + c_{i^*} \langle x^*_{i^*} , e \rangle > 0$.
\end{proof}

\begin{proof}[The proof of Theorem \ref{thm:notrade3}]
Suppose that $T$ admits a strong common prior $p$, and suppose by contradiction that $f = \{f_i\}_{i \in I}$ is an acceptable trade. Let $i^\ast \in I$ and $\o^\ast \in \O$ be such that $\int f_{i^\ast} \dd t_{i^\ast} (\o^\ast,\cdot) > 0$.

Then $\int f_i \dd p \geq 0$ for all $i \in N$ (because $p$ is a common prior, hence by Lemma \ref{lem} for each player it is a convex combination of her types) and $\int f_{i^\ast} \dd p > 0$, which contradicts $\sum_{i \in I} f_i = 0$.

\bigskip

Now suppose that $T$ does not admit a strong common prior. Then $\textup{int} P_i \neq \emptyset$. We now can apply Theorem \ref{tetel10} and conclude there exist $f_i$, $i \in N$, such that $\sum_{i \in N} f_i = 0$ and $\int  f_i \dd t_i (\o) \geq 0$ for all $\o \in \O$ and $i \in N$; and there exist $i^*$ and $\o_{i^*}^* \in \O$ such that $\int f_{i^*} \dd t_{i^*} > 0$. In other words, $(f_i)_{i \in N}$ is an acceptable trade.
\end{proof}

\subsection{No Money Pump Theorems}

In this subsection we relate the various notions of common prior (decision theory considerations) and money pump (finance considerations) to each other. We call the theorems in this subsection no money pump theorems.

The first result says that a probability distribution is either a common prior or a multiplayer money pump. Formally,  

\begin{theorem}\label{thm:mpcp}
Let $T$ be an information structure, and $p \in \Delta(\O)$ be a probability distribution. Then only one of the following statements obtains:
\begin{itemize}
\item $p$ is a common prior,

\item $p$ is a multiplayer money pump.
\end{itemize}
\end{theorem}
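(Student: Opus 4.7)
The plan is to prove this dichotomy in two parts, mirroring the single-player argument in Theorem \ref{thm:noMoneyPump0}. First I would establish mutual exclusivity (a common prior cannot simultaneously be a multiplayer money pump), and then I would show that any probability distribution that fails to be a common prior must be a multiplayer money pump. The argument is really a separating-hyperplane argument in which the multiplayer structure plays only a supporting role.

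For the first part, I would exploit a general observation about semi-trades: any semi-trade $(f_i)_{i \in N}$ satisfies $\int f_i \dd t_i(\o) \geq 0$ for every $\o \in \O$ and every $i \in N$. This is because, by Definition \ref{def:bet}, each $\o$ belongs to some common certainty component contained in the set $\{\o' \colon \int f_j \dd t_j(\o') \geq 0,\ j \in N\}$, and in particular the inequality holds at $\o' = \o$. Now suppose $p$ is a common prior. By Lemma \ref{lem}, for every player $i$ I can write $p = \sum_{\o} \lambda_i(\o)\, t_i(\o)$ as a convex combination of player $i$'s types, whence $\int f_i \dd p \geq 0$ for each $i$; summing over $i$ yields $\int \sum_{i \in N} f_i \dd p \geq 0$, so no semi-trade can witness $p$ as a multiplayer money pump.

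For the converse, suppose $p$ is not a common prior. Then there exists $i^\ast$ with $p \notin P_{i^\ast}$, and by Lemma \ref{lem} this is equivalent to $p \notin \conv\{t_{i^\ast}(\o) \colon \o \in \O\}$. Since the convex hull of finitely many points in $\R^M$ is compact, the strong separating hyperplane theorem produces $g \in \R^M$ and $c \in \R$ with $\int g \dd t_{i^\ast}(\o) \geq c$ for every $\o$ and $\int g \dd p < c$. Because $p$ and each $t_{i^\ast}(\o)$ are probability distributions, replacing $g$ by $g - c\mathbf{1}$ (where $\mathbf{1}$ denotes the constant function $1$) I may assume $c = 0$. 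Now define $f_{i^\ast} := g$ and $f_i := 0$ for $i \neq i^\ast$. Since $\O$ is itself a common certainty component (take $E = \pi_i(\o)$ in Definition \ref{def:cbs}, which satisfies $t_i(\o)(E) = 1$ by Definition \ref{def1}) and $\int f_i \dd t_i(\o') \geq 0$ for every $\o'$ and $i$, the family $(f_i)_{i \in N}$ is a semi-trade. Finally $\int \sum_{i \in N} f_i \dd p = \int g \dd p < 0$, certifying that $p$ is a multiplayer money pump.

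The main subtlety is the observation in the first paragraph that the common-certainty phrasing of a semi-trade collapses, pointwise, to the inequality $\int f_i \dd t_i(\o) \geq 0$ at every $\o$, because every state lies in some common certainty component on which the inequality is enforced. Once this is recognised, the argument reduces to the single-player separation used in Theorem \ref{thm:noMoneyPump0} applied to player $i^\ast$; the other players contribute nothing since we may simply set $f_i = 0$ for $i \neq i^\ast$.
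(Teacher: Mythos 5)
Your proof is correct, and in the second half it is actually more careful than the paper's own proof. The first direction (a common prior cannot be a multiplayer money pump) matches the paper: it reduces to Lemma~\ref{lem} plus your observation that the common-certainty formulation of a semi-trade collapses to the pointwise inequalities $\int f_i \dd t_i(\o) \ge 0$ for all $\o$ and $i$ — a point the paper uses implicitly but you make explicit, which is worth doing. In the second half, however, the paper argues ``if $p$ is a multiplayer money pump, then $p$ is not a common prior,'' which is merely the contrapositive of the first direction and therefore establishes only mutual exclusivity, not exhaustiveness. You instead prove the direction that is actually needed to make the dichotomy exhaustive: if $p$ is not a common prior, you strongly separate $p$ from $\conv\{t_{i^\ast}(\o)\colon \o \in \O\}$ for the offending player $i^\ast$, normalise the threshold to zero, and produce an explicit witnessing semi-trade by setting $f_{i^\ast} = g$ and $f_i = 0$ for $i \ne i^\ast$. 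This mirrors the separation logic of Theorem~\ref{thm:noMoneyPump0} and fills the gap the paper leaves. One small stylistic remark: note that your construction only needs one player to deviate, which is the multiplayer-specific content of the argument; everything else is the single-player separation verbatim, so it is good that you flagged this.
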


\begin{proof}
Let $p$ be a common prior and let $(f_i)_{i \in N}$ be an arbitrary semi-trade. Then by Lemma \ref{lem} we get that 

\begin{equation*}
\int \sum \limits_{i \in I} f_i \dd p \geq 0 ,
\end{equation*}

\noindent hence $p$ is not a multiplayer money pump.

\bigskip

Now suppose that $p$ is a multiplayer money pump and let $(f_i)_{i \in N}$ be the semi-trade witnessing this multiplayer money pump. Then $\sum_{i \in N} f_i$ strongly separates $p$ ($\int \sum_{i \in N} f_i \dd p < 0$) and the set of common priors ($\int \sum_{i \in N} f_i \dd p' \geq 0$, for all $p' \in \cap_{i \in I} P_i$). Therefore $p$ is not a common prior.
\end{proof}

\begin{example}\label{eg:mppl2}
Consider the information structure in Example \ref{ex:pl1}. As we have already noted, in this information structure the set of common priors is $\conv \{\delta_{\o_1},\delta_{\o_4}\}$. Therefore, a probability distribution is a multiplayer money pump if and only if it is not in $\conv \{\delta_{\o_1},\delta_{\o_4}\}$.
\hfill \myenddefinition
\end{example}

The next theorem relates the concepts of universal common prior and universal multiplayer money pump to each other. It states that a maximal set of probability distribution is either a universal common prior or a universal multiplayer money pump. 

\begin{theorem}\label{thm:mpcp2}
Let $T$ be an information structure, and $p \in \Delta(\O)$ be a maximal probability distribution. Then only one of the following statements obtains:

\begin{itemize}
\item $p$ is a universal common prior,

\item $p$ is a universal multiplayer money pump.
\end{itemize}
\end{theorem}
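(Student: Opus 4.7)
The plan is to observe that the assumption of maximality is exactly what is needed to promote the dichotomy of Theorem \ref{thm:mpcp} (common prior vs.\ multiplayer money pump) into the stronger dichotomy desired here. Concretely, a universal common prior is by Definition \ref{def:cp2} nothing more than a common prior that happens to be maximal in the sense of Definition \ref{def:maximality}, and a universal multiplayer money pump is by Definition \ref{money pump3} nothing more than a multiplayer money pump that happens to be maximal. So once maximality is built into the hypothesis, the two target properties differ from ``common prior'' and ``multiplayer money pump'' only by a side condition that is assumed outright.

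First I would invoke Theorem \ref{thm:mpcp} directly on the given $p \in \Delta(\O)$. That theorem tells us that exactly one of the following holds: $p$ is a common prior, or $p$ is a multiplayer money pump. Next, I would split on these two cases. In the first case, $p$ is a common prior and, by hypothesis, satisfies $p(S) > 0$ for every common certainty component $S$, so $p$ witnesses the universal common prior property of $T$ according to Definition \ref{def:cp2}; in particular $p$ itself is a universal common prior. In the second case, $p$ is a multiplayer money pump and, again by the maximality hypothesis, satisfies the side condition in Definition \ref{money pump3}, so $p$ is a universal multiplayer money pump.

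For the ``only one'' part, I would note that a universal common prior is a fortiori a common prior and a universal multiplayer money pump is a fortiori a multiplayer money pump; hence if both alternatives held simultaneously for $p$, this would contradict the exclusivity already established in Theorem \ref{thm:mpcp}. No new separation argument, convex-hull computation, or structural lemma about common certainty components is required at this step — the content of the theorem is entirely contained in Theorem \ref{thm:mpcp} together with the naming conventions built into Definitions \ref{def:cp2}, \ref{def:maximality}, and \ref{money pump3}.

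There is essentially no hard step here; the only thing to be careful about is to make explicit that ``$p$ is a universal common prior'' and ``$T$ admits a universal common prior with witness $p$'' coincide under Definition \ref{def:cp2}, so that the conclusion really is a statement about $p$ itself rather than merely about the existence of some other distribution with the desired properties. Once this identification is made, the proof is a one-line reduction to Theorem \ref{thm:mpcp}.
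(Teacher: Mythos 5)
Your proposal is correct and takes essentially the same approach as the paper: a direct reduction to Theorem \ref{thm:mpcp}, with the observation that under the maximality hypothesis the ``universal'' qualifier in Definitions \ref{def:cp2} and \ref{money pump3} adds nothing beyond what is already assumed. If anything, your framing is slightly cleaner than the paper's, which in one direction detours through restrictions of $p$ to common certainty components via Proposition \ref{prop:prop2} rather than just noting that a universal common prior is \emph{a fortiori} a common prior.
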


\begin{proof}
If $p$ is a universal multiplayer money pump then by Proposition \ref{has3} and Theorem \ref{thm:mpcp} $p$ is not a universal common prior.

\bigskip

If $p$ is a universal common prior then by Theorem \ref{thm:mpcp} any of its restrictions over any common certainty component is not a multiplayer money pump, hence it is not a universal multiplayer money pump.

Since $p$ is maximal there is no third alternative.
\end{proof}

Our last no money pump theorem is the same as a theorem in \citet{Morris2020}. It states that either there is a strong common prior or every strongly maximal probability distribution is a strong multiplayer money pump.

\begin{theorem}\label{thm:mpcp3}
Let $T$ be an information structure, and $p \in \Delta(\O)$ be a strongly maximal probability distribution.

\begin{itemize}
\item $p$ is a strong common prior,

\item $p$ is a strong multiplayer money pump.
\end{itemize}
\end{theorem}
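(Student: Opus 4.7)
The plan is to obtain Theorem \ref{thm:mpcp3} as an almost immediate corollary of Theorem \ref{thm:mpcp}, following the same template as the proof of Theorem \ref{thm:mpcp2}. The key observation is that the hypothesis of strong maximality on $p$ is precisely the extra ingredient needed to upgrade a common prior to a strong common prior (Definition \ref{def:cp3}) and to upgrade a multiplayer money pump to a strong multiplayer money pump (Definition \ref{money pump4}). So the entire content of the theorem reduces to the question of whether $p$ lies in $\bigcap_{i \in N} P_i$ or outside it, and that question is precisely what Theorem \ref{thm:mpcp} answers.

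First I would dispose of mutual exclusivity: if $p$ were simultaneously a strong common prior and a strong multiplayer money pump, then in particular it would be both a common prior and a multiplayer money pump, directly contradicting Theorem \ref{thm:mpcp}. For the dichotomy, I would apply Theorem \ref{thm:mpcp} to $p$ itself, which yields that $p$ is either a common prior or a multiplayer money pump. In the first case, combined with the assumed strong maximality of $p$, Definition \ref{def:cp3} gives that $p$ is a strong common prior; in the second case, combined with strong maximality, Definition \ref{money pump4} gives that $p$ is a strong multiplayer money pump. Since by Theorem \ref{thm:mpcp} these two alternatives are exhaustive and strong maximality holds by hypothesis, there is no third possibility.

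I do not anticipate any real obstacle, since all the serious work was carried out in the proof of Theorem \ref{thm:mpcp} (which rested on Lemma \ref{lem} and a separation argument on $\conv\{t_i(\o)\colon \o\in\O\}$). The adjective \emph{strong} on both sides of the dichotomy here merely absorbs the strong-maximality hypothesis already imposed on $p$, so what remains is a short unpacking of Definitions \ref{def:cp3} and \ref{money pump4}.
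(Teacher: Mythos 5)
Your proof is correct and follows essentially the same route as the paper: both rest entirely on Theorem \ref{thm:mpcp} (the common prior / multiplayer money pump dichotomy), with the strong-maximality hypothesis on $p$ supplying exactly the missing ingredient to upgrade ``common prior'' to ``strong common prior'' (Definition \ref{def:cp3}) and ``multiplayer money pump'' to ``strong multiplayer money pump'' (Definition \ref{money pump4}). The paper routes the mutual-exclusivity step through Propositions \ref{has1} and \ref{has3} rather than unwinding the definitions directly as you do, but the content is identical.
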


\begin{proof}
If $p$ is a strong multiplayer money pump then by Propositions \ref{has1}, \ref{has3} and Theorem \ref{thm:mpcp} $p$ is not a strong common prior.

\bigskip

If $p$ is a strong common prior then by Propositions \ref{has1}, \ref{has3} and Theorem \ref{thm:mpcp} it is not a strong multiplayer money pump.

Since $p$ is strongly maximal there is no third alternative.
\end{proof}

\section{Conclusion}

In the single player setting we argued for a definition of prior according to which a probability distribution is a prior if it is disintegrable. With a definition of prior in hand, we turned our attention to the multiplayer setting.  

Regarding the multiplayer setting we can summarise our results in three theorems. Each theorem states three equivalent statements, one based on the decision theory considerations, one on the economic considerations, and one on financial considerations. The proofs of the theorems are direct corollaries of our previous results, and hence those are omitted.

\begin{theorem}
Let $T$ be an information structure. All of the following three statements are equivalent

\begin{itemize}
\item There is no common prior,

\item There exists an agreeable trade,

\item Every probability distribution over the state space is a multiplayer money pump.
\end{itemize}
\end{theorem}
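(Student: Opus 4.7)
The plan is to establish the equivalence by showing that each of the three statements is individually equivalent to the first (``there is no common prior''), exploiting the dichotomies already proven earlier in the paper. Since the theorem is explicitly described as a direct corollary of prior results, no new machinery is needed; the work is purely in assembling the right citations.

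First I would handle the equivalence between ``no common prior'' and ``there exists an agreeable trade''. This is exactly the content of Theorem \ref{thm:notrade1} (the classical Morris--Feinberg--Samet no trade theorem, cited to \citet{Samet1998}), which asserts that an information structure attains a common prior if and only if no agreeable trade exists over it. So this equivalence requires nothing more than invoking that theorem.

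Next I would handle the equivalence between ``no common prior'' and ``every probability distribution is a multiplayer money pump''. The key tool here is Theorem \ref{thm:mpcp}, which states that for every probability distribution $p \in \Delta(\O)$, exactly one of the following holds: $p$ is a common prior, or $p$ is a multiplayer money pump. From this dichotomy, if there is no common prior at all, then no $p \in \Delta(\O)$ satisfies the first alternative, so every $p$ must be a multiplayer money pump. Conversely, if every $p \in \Delta(\O)$ is a multiplayer money pump, then by the exclusivity in Theorem \ref{thm:mpcp} no $p$ can be a common prior, so no common prior exists. Combining the two equivalences yields the three-way equivalence.

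There is no serious obstacle in this proof; the only point requiring mild care is recognising that Theorem \ref{thm:mpcp} is a dichotomy for every individual probability distribution (so the universal quantifier ``every $p$ is a multiplayer money pump'' correctly dualises ``no $p$ is a common prior''), rather than merely an ``at least one of the two'' statement. Once this is observed, the proof reduces to a two-line citation argument, which is presumably the reason the authors chose to omit it in favour of referencing the earlier results.
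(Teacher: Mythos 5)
Your proof is correct and matches exactly the route the authors had in mind: the paper explicitly omits the proof as a ``direct corollary of our previous results,'' and the two results you cite -- Theorem \ref{thm:notrade1} for the common-prior/agreeable-trade dichotomy, and the per-distribution dichotomy of Theorem \ref{thm:mpcp} for the common-prior/money-pump equivalence -- are precisely the ingredients that make the three statements equivalent. Your explicit remark that Theorem \ref{thm:mpcp} is an \emph{exclusive} dichotomy holding for each individual $p$, so that ``no $p$ is a common prior'' correctly dualises to ``every $p$ is a multiplayer money pump,'' is exactly the small point one needs to notice for the argument to close.
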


The "universal" version of the above theorem is as follows:

\begin{theorem}
Let $T$ be an information structure. All the following three statements are equivalent

\begin{itemize}
\item There is no universal common prior,

\item There exists a weakly agreeable trade,

\item Every maximal probability distribution over the state space is a universal multiplayer money pump.
\end{itemize}
\end{theorem}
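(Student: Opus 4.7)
The plan is to observe that this theorem is a straightforward corollary of Theorem \ref{thm:notrade2} (which gives the no-trade dichotomy at the universal level) together with Theorem \ref{thm:mpcp2} (which gives the no-money-pump dichotomy for maximal distributions). The only subtlety worth spelling out is that a universal common prior is, by Definition \ref{def:cp2}, necessarily maximal in the sense of Definition \ref{def:maximality}. Once this is noted, the three-way equivalence reduces to combining two dichotomies.

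First I would establish the equivalence of the first two statements: this is literally the content of Theorem \ref{thm:notrade2}, so no additional argument is required beyond citing it.

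For the equivalence of the first and third statements, I would argue in two directions. Assume there is no universal common prior, and let $p \in \Delta(\O)$ be an arbitrary maximal probability distribution. By Theorem \ref{thm:mpcp2}, exactly one of the following holds for $p$: either $p$ is a universal common prior, or $p$ is a universal multiplayer money pump. Since by assumption the former is impossible, $p$ must be a universal multiplayer money pump, which is what (3) asserts. Conversely, assume every maximal probability distribution is a universal multiplayer money pump. If there were a universal common prior $p^{\ast}$, then $p^{\ast}$ would assign positive probability to every common certainty component, so $p^{\ast}$ would be maximal; by hypothesis $p^{\ast}$ would then also be a universal multiplayer money pump, contradicting the dichotomy in Theorem \ref{thm:mpcp2}. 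Hence no universal common prior exists.

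There is no real obstacle here: the hard work has already been done in Theorems \ref{thm:notrade2} and \ref{thm:mpcp2}. The only thing to keep an eye on is the quantifier ``every maximal probability distribution'' in statement (3), which is what forces us to invoke the dichotomy universally over maximal $p$ rather than for a single witness; the argument above handles this correctly by contrapositive (one maximal $p$ that is not a universal multiplayer money pump would, via Theorem \ref{thm:mpcp2}, be a universal common prior).
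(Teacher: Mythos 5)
Your proposal is correct and matches the route the paper intends: the authors explicitly state that the three summary theorems in the Conclusion are ``direct corollaries of our previous results,'' and you have spelled out exactly that derivation, combining the universal no-trade dichotomy (Theorem 29) with the universal no-money-pump dichotomy (Theorem 33), with the one needed observation that a universal common prior is maximal by definition.
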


Finally, the "strong" version is the following:

\begin{theorem}
Let $T$ be an information structure. The following three statements are equivalent

\begin{itemize}
\item There is no strong common prior,

\item There exists an acceptable trade,

\item Every strongly maximal probability distribution over the state space is a strong multiplayer money pump.
\end{itemize}
\end{theorem}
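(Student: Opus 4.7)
The plan is to assemble the equivalence from two components already in hand: the no trade theorem (Theorem \ref{thm:notrade3}) connecting strong common priors and acceptable trades, and the no money pump theorem (Theorem \ref{thm:mpcp3}) connecting strong common priors and strong multiplayer money pumps. The key observation that glues everything together is that a strong common prior is, by Definition \ref{def:cp3}, automatically strongly maximal in the sense of Definition \ref{def:strongmaximality}, because both conditions demand $p(\pi_i) > 0$ for every type $\pi_i$ of every player $i$.

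First I would establish the equivalence of the first two bullets: this is nothing more than the contrapositive reading of Theorem \ref{thm:notrade3}, which is a strict dichotomy between admitting a strong common prior and admitting an acceptable trade.

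Next I would handle the equivalence of the first and third bullets using Theorem \ref{thm:mpcp3}. For the forward direction, assume there is no strong common prior and fix an arbitrary strongly maximal probability distribution $p$. Theorem \ref{thm:mpcp3} forces $p$ to be either a strong common prior or a strong multiplayer money pump, and by hypothesis the first alternative is ruled out, so $p$ must be a strong multiplayer money pump. For the reverse direction, suppose by contradiction that a strong common prior $p$ exists while every strongly maximal distribution is a strong multiplayer money pump. Since $p$ is strongly maximal (by the observation above), the hypothesis makes $p$ a strong multiplayer money pump as well, contradicting the mutual exclusivity asserted by Theorem \ref{thm:mpcp3}.

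There is no real obstacle here: as the authors note, this theorem is a direct corollary of earlier results, and the only conceptual point to verify is the compatibility of the definitions of strong common prior and strongly maximal, which is immediate. The proof will therefore be short, consisting essentially of a bookkeeping argument that threads Theorems \ref{thm:notrade3} and \ref{thm:mpcp3} together.
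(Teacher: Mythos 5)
Your proposal is correct and matches the paper's intent exactly: the authors explicitly omit the proof, stating that it is a direct corollary of earlier results, and your assembly of Theorems \ref{thm:notrade3} and \ref{thm:mpcp3}, with the observation that a strong common prior is strongly maximal by definition, is precisely the bookkeeping they have in mind.
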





\end{document}